\documentclass[12pt]{article}

\usepackage{amsmath,amssymb,amsfonts}
\usepackage{algorithm}
\usepackage{algorithmic}
\usepackage{graphicx}
\usepackage{textcomp}
\usepackage{bm}
\usepackage{array}
\usepackage{booktabs}
\usepackage{float}
\usepackage{placeins}
\usepackage{geometry}
\usepackage{hyperref}
\usepackage{adjustbox}

\newtheorem{lemma}{Lemma}
\newtheorem{theorem}{Theorem}
\newtheorem{corollary}{Corollary}

\newenvironment{proof}{\par\noindent\textit{Proof.}\ }{\hfill$\square$\par}

\newcommand{\EJ}{\mathrm{EJ}}
\newcommand{\Z}{\mathbb{Z}}

\newcommand{\om}{\omega}
\newcommand{\modEJ}{\operatorname{mod}_{\EJ_t}}
\newcommand{\dist}{d_{\EJ}}

\newcommand{\B}{\mathcal{B}_t}
\newcommand{\Kset}{\mathcal{K}_t}
\newcommand{\Count}{\operatorname{Count}_{\EJ}}

\title{Closed-Form and Constant-Time New-Source Selection for Fault-Tolerant Broadcasting in Dense Eisenstein--Jacobi Networks}

\author{Bader Albader\\
\small Department of Computer Science, Kuwait University, Kuwait\\
\small \texttt{albader@cs.ku.edu.kw}}

\date{}

\begin{document}

\maketitle

\begin{abstract}
Fault-tolerant broadcasting in dense Eisenstein--Jacobi networks requires an efficient recovery mechanism when faulty nodes interrupt the original broadcast structure. A recently published re-rooting-based broadcasting method for dense Eisenstein--Jacobi networks proves that, for any two faulty nodes, recovery can be performed by selecting a new source that is at maximum graph distance from both faults. However, the recovery step still benefits from a direct method that selects such a source without scanning the network or testing all boundary candidates. This paper develops a self-contained closed-form and constant-time new-source counting and selection method for dense Eisenstein--Jacobi networks. The two-fault problem is translated to an equivalent boundary-intersection problem involving the origin and a difference node. The distance-$t$ boundary, where $t$ is the network diameter, is partitioned into six directed sides of the Eisenstein--Jacobi hexagon. Because the network is a quotient network, the intersection equations must be solved modulo the defining Eisenstein--Jacobi lattice. Therefore, the proposed algorithms evaluate seven possible quotient-lattice shifts together with the $6\times 6$ side pairs, giving at most $7\cdot 6\cdot 6=252$ algebraic systems. For faults $0$ and $A$, the first algorithm counts all valid new sources exactly. For two arbitrary faults, the second algorithm selects one valid new source by solving translated side-pair systems, verifying the candidate, and shifting it back. Each system is either a nonparallel two-by-two linear system with at most one candidate, or a parallel system whose feasible candidates form an integer interval. Since the number of systems is fixed, both algorithms run in $O(1)$ time under the fixed-word arithmetic model. Computational validation over $500{,}000$ sampled fault pairs and $40{,}000$ re-rooting trials confirms that the direct selector always returns a valid new source and that the recovered broadcast reaches all non-faulty nodes in the tested settings.
\end{abstract}

\noindent\textbf{Keywords:} Dense Eisenstein--Jacobi networks, fault-tolerant broadcasting, re-rooting, new-source selection, interconnection networks, constant-time algorithms, hexagonal networks, graph distance.

\section{Introduction}
\label{sec:introduction}

Interconnection networks are widely used as models for communication in parallel and distributed systems. Their topology directly affects routing complexity, broadcasting time, diameter, scalability, and robustness under failures. Algebraic interconnection networks are especially useful because their nodes and links can be described by compact coordinate rules. Gaussian and Eisenstein--Jacobi networks are two important examples. Both are constructed from quotient rings, both admit modular coordinate descriptions, and both have small diameter relative to the number of nodes \cite{flahive2010topology,martinez2006dense,beivide2005gaussian}. Recent work on circulant and algebraic network topologies for networks-on-chip continues to emphasize coordinate-based routing, small diameter, deadlock-free communication, and resilience under faults \cite{monakhov2021adaptive,romanov2023ringsplit,monakhova2023relative,sukhov2024virtual,samala2025ftam,yu2025faultqos}.

Dense Eisenstein--Jacobi networks form a degree-six family of hexagonal networks. In this paper, the network is denoted by $\EJ_t$, where $t$ is the network diameter. The dense network is generated by
\begin{equation}
\alpha=(t+1)+t\om,
\label{eq:alpha}
\end{equation}
where $\om$ is an Eisenstein--Jacobi unit satisfying $\om^2+\om+1=0$. The number of nodes is
\begin{equation}
N=3t^2+3t+1.
\label{eq:N}
\end{equation}
Equivalently, if the original network parameter is written as $n=t+1$, then $N=3n^2-3n+1$. This paper uses $t$ throughout because it is the diameter notation used in the Eisenstein--Jacobi re-rooting framework.

In a fault-free dense Eisenstein--Jacobi network, a source can broadcast to all nodes within $t$ steps. When one or more nodes fail, part of the original broadcast structure may become unusable. The re-rooting method addresses this problem by choosing a new source that is farthest from the faulty nodes. For two faulty nodes $A$ and $B$, the target condition is
\begin{equation}
\dist(NS,A)=t,
\qquad
\dist(NS,B)=t.
\label{eq:valid-source-intro}
\end{equation}
The published Eisenstein--Jacobi re-rooting paper proves the existence of such a node for two faulty nodes and uses this fact to recover the broadcast \cite{albader2026ejrerooting}. The remaining algorithmic question is how to select such a source as directly as possible.

This paper answers that question in a self-contained way for dense Eisenstein--Jacobi networks. The key observation is that the distance-$t$ boundary is a six-sided hexagon. Therefore, the valid-source problem can be reduced to a fixed set of side-pair intersections. The construction developed here is native to the Eisenstein--Jacobi geometry: it uses six boundary sides, $252$ translated side-pair systems, and the Eisenstein--Jacobi graph distance in \eqref{eq:ej-distance}.

Given two faulty nodes $A$ and $B$, define the difference node
\begin{equation}
C=\modEJ(B-A).
\label{eq:difference-node-intro}
\end{equation}
By translation invariance, it is enough to find a point $P$ satisfying
\begin{equation}
\dist(P,0)=t,
\qquad
\dist(P,C)=t.
\label{eq:P-condition-intro}
\end{equation}
The new source is then obtained by shifting back:
\begin{equation}
NS=\modEJ(A+P).
\label{eq:shift-back-intro}
\end{equation}
Thus, the selection problem becomes an intersection problem between two distance-$t$ boundaries.

The main contributions of this paper are as follows.
\begin{itemize}
\item The distance-$t$ boundary of $\EJ_t$ is partitioned into six directed sides, each containing exactly $t$ nodes.
\item For faults $0$ and $A$, the exact number of valid new sources is expressed as a closed finite sum of translated side-pair intersection counts over seven quotient shifts and thirty-six side pairs.
\item A constant-time counting algorithm is developed for faults $0$ and $A$, and a direct side-pair selector is developed for two arbitrary faulty nodes. Both use the same fixed set of $252$ translated side-pair systems.
\item The correctness of the selector is proved through soundness and completeness theorems.
\item The time complexity of both the counting algorithm and the selection algorithm is shown to be $O(1)$ because each checks at most $252$ translated side pairs, independent of $t$, $N$, or the boundary size $6t$.
\item Worked examples and figures are included to illustrate the counting and selection procedures.
\end{itemize}

The rest of the paper is organized as follows. Section~\ref{sec:related} reviews related work. Section~\ref{sec:preliminaries} gives the network model and distance notation. Section~\ref{sec:boundary} defines the six-side boundary partition. Section~\ref{sec:counting} gives the exact number of valid new sources for faults $0$ and $A$. Section~\ref{sec:direct-selection} presents the constant-time selector for arbitrary fault pairs. Section~\ref{sec:complexity} compares the complexity with search-based alternatives. Section~\ref{sec:validation} presents the computational validation results. Section~\ref{sec:discussion} discusses the role of the result within the re-rooting framework, and Section~\ref{sec:conclusion} concludes the paper.

\section{Related Work}
\label{sec:related}

Gaussian and Eisenstein--Jacobi interconnection networks were introduced and analyzed as quotient-ring networks with strong symmetry and compact algebraic descriptions \cite{flahive2010topology}. Dense Gaussian networks and Gaussian coordinate representations were later studied for multiprocessor and network-on-chip settings, where small diameter and regular degree are important design properties \cite{martinez2006dense,beivide2005gaussian}. Hierarchical extensions and related circulant constructions show that algebraic network families remain useful for large-scale interconnection design \cite{vallejo2008hierarchical}.

Recent research on circulant and algebraic interconnection networks shows that compact coordinate representations remain useful for scalable routing. Monakhov \emph{et al.} studied adaptive shortest-path search for networks-on-chip based on circulant topologies, while Romanov \emph{et al.} investigated routing algorithms for two-dimensional optimal circulant NoCs \cite{monakhov2021adaptive,romanov2020circulant}. Monakhova \emph{et al.} later developed relative-addressing routing algorithms for optimal degree-four circulant networks, and Sukhov \emph{et al.} proposed a virtual coordinate system for circulant NoC routing \cite{monakhova2023relative,sukhov2024virtual}.

Fault tolerance is also a central concern in networks-on-chip and related interconnection systems. Surveys and monographs on on-chip communication describe how node and link failures can affect routing reliability and system-level performance \cite{kliazovich2010survey,pasricha2008onchip,flich2010designing}. Recent NoC studies continue to investigate fault-tolerant and adaptive mechanisms, including permanent-fault router architectures, deadlock-free routing for circulant NoCs, fault-tolerant application mapping, and quality-of-service-aware routing methods \cite{rashid2020router,romanov2023ringsplit,samala2025ftam,yu2025faultqos}.

Broadcasting and spanning-tree construction are closely related to the present work because a broadcast algorithm must deliver information from one source to all non-faulty nodes while avoiding failed components. Independent spanning-tree methods and classical studies of highly regular topologies illustrate the importance of having alternative communication structures under faults \cite{wu1999distributed,saad1988topological}. The present work is most closely related to the published re-rooting-based fault-tolerant broadcasting method for dense Eisenstein--Jacobi networks \cite{albader2026ejrerooting}. That paper establishes the two-fault existence result and uses it to restore broadcast coverage by selecting a new source at distance $t$ from both faulty nodes. The present paper does not replace that broadcasting model. Instead, it strengthens the source-selection step by replacing boundary search with closed-form counting and constant-time algebraic selection.

\section{Preliminaries and Network Model}
\label{sec:preliminaries}

\subsection{Eisenstein--Jacobi Coordinates}
A node in $\EJ_t$ is represented by an Eisenstein--Jacobi coordinate
\begin{equation}
X=x+y\om,
\end{equation}
which we also write as the ordered pair $X=(x,y)$. Since $\EJ_t$ is a quotient network, different coordinate pairs may represent the same node. The notation
\begin{equation}
\modEJ(X)
\end{equation}
denotes the canonical representative of $X$ in the selected coordinate region of $\EJ_t$.

The generator in \eqref{eq:alpha} gives a network with
\begin{equation}
N=3t^2+3t+1
\end{equation}
nodes and diameter $t$. Throughout the paper, all coordinate sums and differences that represent network nodes are reduced by $\modEJ(\cdot)$ when necessary.

\subsection{Hexagonal Graph Distance}
For a reduced coordinate $X=(x,y)$, the Eisenstein--Jacobi graph distance from the origin is
\begin{equation}
\dist(0,X)=\max\{|x|,|y|,|x+y|\}.
\label{eq:ej-distance}
\end{equation}
For two nodes $U,V\in\EJ_t$, the distance is computed by reducing the difference:
\begin{equation}
\dist(U,V)=\dist(0,\modEJ(U-V)).
\label{eq:distance-difference}
\end{equation}
The distance is translation invariant. Hence, for any nodes $U,V,T\in\EJ_t$,
\begin{equation}
\dist(U,V)=\dist(\modEJ(U+T),\modEJ(V+T)).
\label{eq:translation-invariance}
\end{equation}
This property is the reason why an arbitrary two-fault problem can be translated to the origin and a difference node.

\subsection{Boundary Nodes and Valid New Sources}
The distance-$t$ boundary around the origin is
\begin{equation}
\B=\{P\in\EJ_t:\dist(P,0)=t\}.
\label{eq:boundary}
\end{equation}
More generally, the distance-$t$ boundary around a node $A$ is
\begin{equation}
A+\B=\{\modEJ(A+P):P\in\B\}.
\end{equation}
A node $P$ is a valid new source with respect to faults $0$ and $A$ if
\begin{equation}
P\in \B\cap(A+\B).
\end{equation}
The number of such nodes is denoted by
\begin{equation}
\Count(A,t)=|\B\cap(A+\B)|.
\label{eq:count-definition}
\end{equation}
For arbitrary faults $A$ and $B$, a node $NS$ is valid if it satisfies \eqref{eq:valid-source-intro}.

\subsection{Integer Labeling}
The coordinate representation is compatible with an integer label modulo $N$. If the original dense Eisenstein--Jacobi parameter is $n=t+1$, then the integer label of $x+y\om$ is
\begin{equation}
\phi(x+y\om)\equiv t x-(t+1)y \pmod{N},
\label{eq:integer-label}
\end{equation}
where $N=3t^2+3t+1$. This is the same as
\begin{equation}
\phi(x+y\om)\equiv (n-1)x-ny \pmod{3n^2-3n+1}.
\end{equation}
The labeling is useful in implementation because differences can be computed either in coordinate form or through their corresponding modular labels. The mathematical development below uses coordinates because the boundary geometry is clearer in the hexagonal representation.

\begin{figure}[H]
\centering
\includegraphics[width=0.6\linewidth]{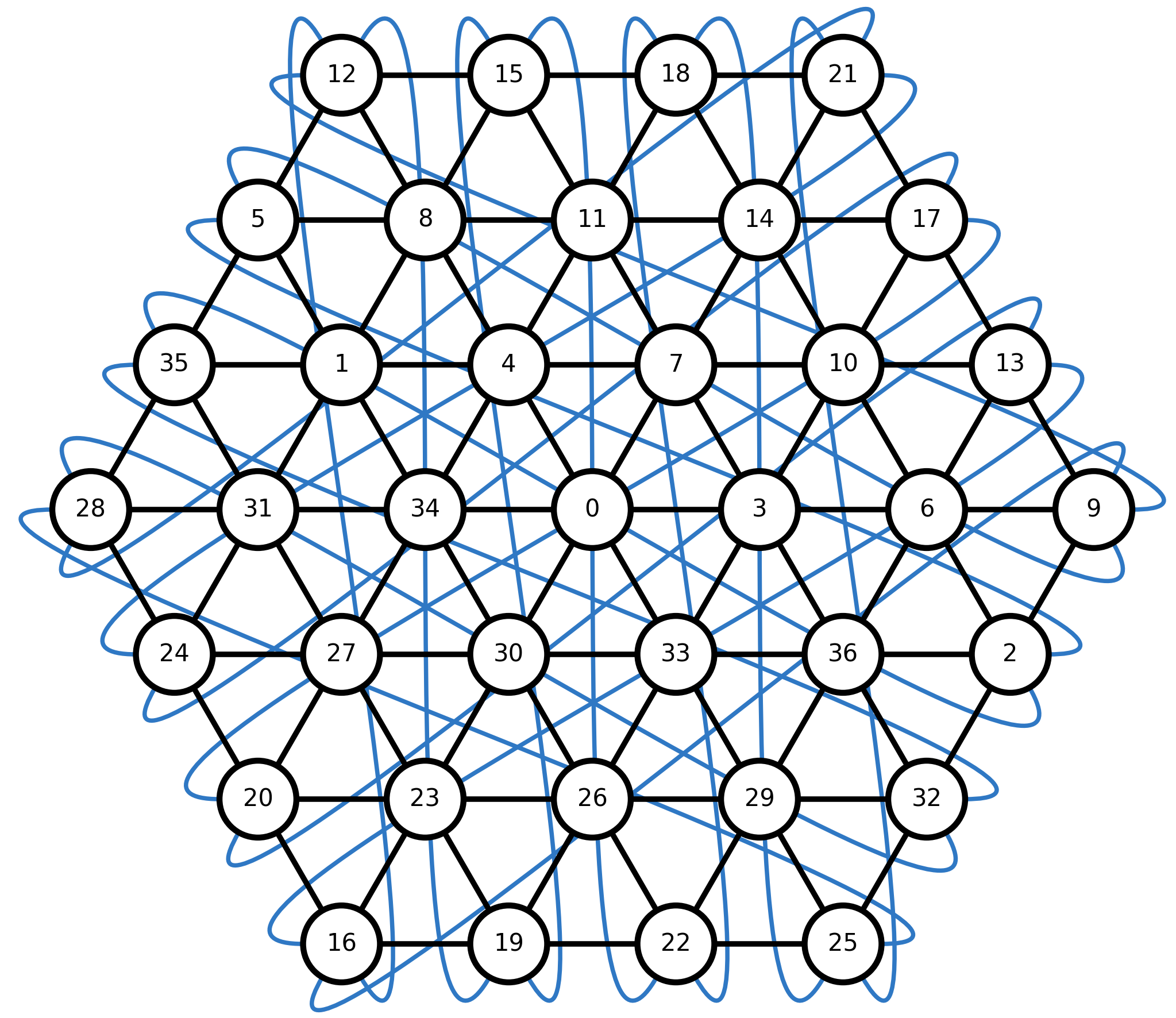}
\caption{Example dense Eisenstein--Jacobi network $H_3$ with integer node labels. This topology illustrates the hexagonal quotient structure and the wrap-around nature used by the counting and selection algorithms.}
\label{fig:ej-h3-int}
\end{figure}

\section{Six-Side Boundary Partition}
\label{sec:boundary}

The boundary $\B$ is a hexagon. To count and select valid new sources without scanning $\B$, we partition it into six directed sides. Let
\begin{align}
V_1&=(t,0),       & u_1&=(-1,1),\nonumber\\
V_2&=(0,t),       & u_2&=(-1,0),\nonumber\\
V_3&=(-t,t),      & u_3&=(0,-1),\nonumber\\
V_4&=(-t,0),      & u_4&=(1,-1),\nonumber\\
V_5&=(0,-t),      & u_5&=(1,0),\nonumber\\
V_6&=(t,-t),      & u_6&=(0,1).
\label{eq:vertices-directions}
\end{align}
For $i=1,\ldots,6$, define
\begin{equation}
S_i=\{V_i+s u_i:0\leq s\leq t-1,\ s\in\Z\}.
\label{eq:side-definition}
\end{equation}
Each side has exactly $t$ nodes. The endpoint convention in \eqref{eq:side-definition} includes the first vertex of each side and excludes the next vertex. Therefore the six sets are disjoint.

\begin{figure}[H]
\centering
\includegraphics[width=0.6\linewidth]{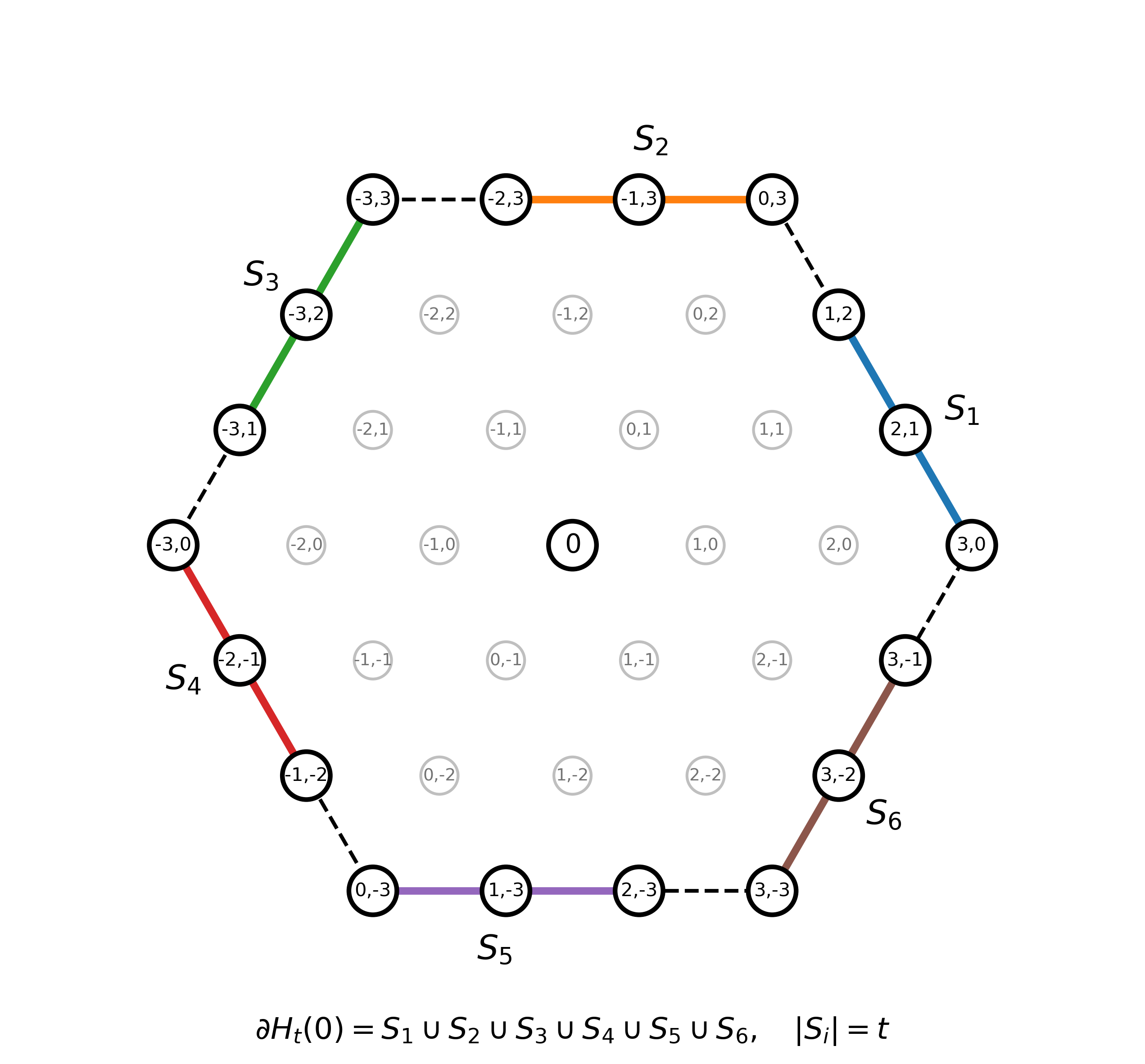}
\caption{Six-side partition of the Eisenstein--Jacobi distance-$t$ boundary. The boundary is written as $\mathcal{B}_t=S_1\cup\cdots\cup S_6$, where each side has $t$ boundary nodes.}
\label{fig:six-side-boundary}
\end{figure}

\begin{lemma}[Boundary partition]
The sets $S_1,\ldots,S_6$ form a disjoint partition of the distance-$t$ boundary:
\begin{equation}
\B=\bigcup_{i=1}^{6}S_i,
\qquad
S_i\cap S_j=\emptyset \quad (i\neq j).
\end{equation}
Consequently, $|\B|=6t$.
\end{lemma}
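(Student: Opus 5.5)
The approach is to work directly in the reduced coordinate region. Here the boundary $\B$ is the set of reduced coordinates $(x,y)$ with $\max\{|x|,|y|,|x+y|\}=t$, by \eqref{eq:ej-distance}. The canonical region of $\EJ_t$ consists of the $N=3t^2+3t+1$ coordinates with $\max\{|x|,|y|,|x+y|\}\le t$. Since $N$ is exactly the number of lattice points in this hexagon, every node has a unique representative there, so counting nodes of $\B$ amounts to counting lattice points with hexagonal norm exactly $t$. I will prove three things in order: each $S_i\subseteq\B$, the $S_i$ are pairwise disjoint, and every point of $\B$ lies in some $S_i$. The cardinality $|\B|=6t$ then follows from $|S_i|=t$.

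\emph{Step 1 (containment).} For each $i$, write the generic point $V_i+su_i$ with $0\le s\le t-1$ explicitly. For example, $S_1=\{(t-s,s)\}$, $S_2=\{(-s,t)\}$, $S_3=\{(-t,t-s)\}$, $S_4=\{(-t+s,-s)\}$, $S_5=\{(s,-t)\}$, $S_6=\{(t,-t+s)\}$. For each side, one of the three quantities $x$, $y$, $x+y$ has absolute value exactly $t$ (respectively $x+y=t$, $y=t$, $x=-t$, $x+y=-t$, $y=-t$, $x=t$), and the other two have absolute value at most $t$ because $0\le s\le t$. Hence each point has norm $t$ and lies in the canonical region, so $S_i\subseteq\B$. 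The points $V_i+su_i$ are distinct for distinct $s$, so $|S_i|=t$.

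\emph{Step 2 (disjointness and covering).} I will attach to each side a defining pair of sign conditions. For instance, $S_1=\{x+y=t,\ 1\le x\le t\}$ and $S_2=\{y=t,\ -t+1\le x\le 0\}$, and similarly for the others by the symmetry $X\mapsto\om X$ that permutes the six sides cyclically. For covering, I take $P=(x,y)$ with norm $t$ and split according to which of the six equalities $x+y=t$, $y=t$, $x=-t$, $x+y=-t$, $y=-t$, $x=t$ holds. Several may hold simultaneously, but only at the six vertices $V_i$. For a non-vertex point, exactly one equality holds and $P$ lies on the relative interior of the corresponding side. For a vertex, the half-open convention assigns $V_i$ to $S_i$. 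Disjointness follows in the same way: a point in $S_i\cap S_j$ with $i\ne j$ would need to satisfy two of the six equalities, which forces it to be a vertex. But $V_{i+1}$ is excluded from $S_i$, since $s=t$ is not allowed, and the only vertex in $S_i$ is $V_i$.

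The main obstacle is the bookkeeping at the six corners, where two defining equalities coincide. I would handle it by checking once that the two equalities meeting at $V_{i+1}$ force $s=t$ in the parametrization of $S_i$. The general claim then follows by applying the rotation $\om$, which maps $S_i$ to $S_{i+1}$ and preserves the norm. A secondary point to state explicitly is that the canonical representatives are exactly the norm-$\le t$ hexagon, which is what allows the boundary to be analysed without modular wrap-around.
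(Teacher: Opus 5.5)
Your proposal is correct and is essentially the paper's proof: containment because each side lies on one of $x+y=t$, $y=t$, $x=-t$, $x+y=-t$, $y=-t$, $x=t$ with the other two quantities bounded by $t$; covering because a norm-$t$ point satisfies one of these six equalities; and disjointness via the half-open convention, where your check that two equalities hold together only at the six vertices $V_i$ spells out the corner cases the paper calls ``analogous.'' Two small precision points follow. First, the norm-preserving map sending $(V_i,u_i)$ to $(V_{i+1},u_{i+1})$ is $(x,y)\mapsto(-y,x+y)$; if $\omega$ is read literally as a root of $\omega^2+\omega+1=0$, then $X\mapsto\omega X$ is $(x,y)\mapsto(-y,x-y)$, which does not do this. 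Second, matching the count $N$ alone does not make the radius-$t$ hexagon a complete residue system (you also need, e.g., that no two of its points are congruent modulo $\alpha$), though the paper simply takes that hexagon as the canonical region, so this belongs to the model rather than to the lemma.
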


\begin{proof}
For every point $P=(x,y)$ in one of the six sets, the value of $\max\{|x|,|y|,|x+y|\}$ is exactly $t$. For example, on $S_1$ we have $P=(t-s,s)$, so $x+y=t$ and $0\leq x,y\leq t$. Hence $\dist(P,0)=t$. The other five sides are analogous and correspond to $y=t$, $x=-t$, $x+y=-t$, $y=-t$, and $x=t$. Conversely, if $\dist(P,0)=t$, then one of the six quantities $x$, $y$, $-x$, $-y$, $x+y$, or $-(x+y)$ is equal to $t$. This places $P$ on one of the six sides. The half-open endpoint convention assigns each vertex to exactly one side, so the sides are disjoint. Since each side contains $t$ integer points, the boundary contains $6t$ nodes.
\end{proof}

\section{Number of Valid New Sources for Faults $0$ and $A$}
\label{sec:counting}

This section gives the missing counting component: for a given node $A$, how many valid new sources exist when the faulty nodes are $0$ and $A$? Figure~\ref{fig:quotient-shift} illustrates why quotient-lattice shifts must be included, and Figure~\ref{fig:count-intersection} shows a counting example. The answer is the size of the quotient boundary intersection
\begin{equation}
\B\cap(A+\B) \quad \text{in } \EJ_t.
\end{equation}
A direct planar intersection of the six sides is not sufficient, because two coordinate points may represent the same node modulo the Eisenstein--Jacobi lattice. Therefore the correct side-pair equation must include the finite set of quotient-lattice shifts that can occur between radius-$t$ representatives.

\subsection{Quotient-Lattice Shifts}
Let
\begin{equation}
\Kset=\{(0,0),\pm L_1,\pm L_2,\pm L_3\},
\label{eq:kernel-shift-set}
\end{equation}
where
\begin{equation}
L_1=(t+1,t),\quad
L_2=(2t+1,-t-1),\quad
L_3=(t,-2t-1).
\end{equation}
These are the zero shift and the six shortest nonzero lattice shifts induced by the labeling
\begin{equation}
\phi(x,y)\equiv tx-(t+1)y \pmod{N}.
\end{equation}
They satisfy $\phi(L)=0$ for every $L\in\Kset$, and hence adding any such shift does not change the represented node in $\EJ_t$.

\begin{lemma}[Finite shift sufficiency]
\label{lem:finite-shifts}
If $P,Q,A$ are canonical representatives in the radius-$t$ hexagon and
\begin{equation}
P\equiv A+Q \pmod{\alpha},
\end{equation}
with $P,Q\in\B$, then there exists a shift $L\in\Kset$ such that
\begin{equation}
P=A+Q+L.
\end{equation}
\end{lemma}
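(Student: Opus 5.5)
The plan is to show that the lattice vector $D=P-A-Q$ is short in the hexagonal norm, and that the only short lattice vectors are the seven elements of $\Kset$.

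\textbf{Step 1 (short lattice vector).} Since $P\equiv A+Q \pmod{\alpha}$, the vector $D=P-A-Q$ lies in the quotient lattice $\Lambda=\alpha\,\Z[\om]$. I will check that $L_2=L_1+L_3$ and that $L_1=\alpha$ and $L_3$ (up to sign, $\alpha\om$ or $\alpha\om^2$) generate $\Lambda$. Hence $D=aL_1+bL_3$ for some integers $a,b$.

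Next I use that $\|X\|:=\max\{|x|,|y|,|x+y|\}$ is a norm on $\Z^2$. It is the maximum of absolute values of three linear forms, so the triangle inequality holds. The canonical representatives of the dense network lie in the radius-$t$ hexagon, so $\|P\|=\|Q\|=t$ and $\|A\|\le t$. Therefore
\begin{equation*}
\|D\|\le \|P\|+\|A\|+\|Q\|\le 3t .
\end{equation*}

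\textbf{Step 2 (classify short lattice vectors).} I will show that every nonzero lattice vector outside $\{\pm L_1,\pm L_2,\pm L_3\}$ has norm at least $3t+2>3t$. Writing $D=aL_1+bL_3$ explicitly gives
\begin{equation*}
x=a(t+1)+bt,\qquad y=at-b(2t+1),\qquad x+y=a(2t+1)-b(t+1).
\end{equation*}
I then split into cases by the signs of $a$ and $b$.

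If $a$ and $b$ have the same sign, then $|x|=|a|(t+1)+|b|t$. This is at most $3t$ only when $|a|+|b|\le 2$, and a check of those few pairs leaves only $(\pm1,0)$ and $(0,\pm1)$, since for example $(1,1)$ gives $x=2t+1$ but $y=-t-1$ and $x+y=t$. That is fine, but then I recheck the norm directly. Reflecting on this, I would instead organise the case split by which of $x$, $y$, $x+y$ dominates.

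If $a$ and $b$ have opposite signs, then $|y|=|a|t+|b|(2t+1)$ and $|x+y|=|a|(2t+1)+|b|(t+1)$. Both exceed $3t$ unless $|a|=|b|=1$, which gives $\pm(L_1-L_3)$. That vector has norm $\ge 3t+2$ via its $y$-coordinate $3t+1$, up to sign conventions, which I will verify.

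The expected outcome is that the allowed pairs are exactly $(a,b)\in\{(0,0),\pm(1,0),\pm(0,1),\pm(1,1)\}$, i.e. $D\in\Kset$.

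\textbf{Main obstacle.} The main obstacle is getting this finite case analysis exactly right. In particular, I must confirm that the second shell of the hexagonal lattice (vectors such as $L_1+L_2=(3t+2,-1)$ and $2L_1$) has norm strictly greater than $3t$; the margin is only $2$.

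\textbf{Fallback.} If the sign-case computation becomes messy, I would use a cleaner route. The translates $\Lambda+H_t$ of the radius-$t$ hexagon $H_t$ tile $\Z^2$, since the network is dense with $N=|H_t|$ nodes. A lattice vector $D$ with $\|D\|\le 3t$ then satisfies $D\in H_t+H_t+H_t$. The only lattice translates of $H_t$ meeting $H_{2t}$ are the seven neighbours of the tiling, so $D-A\in H_{2t}$ lies in the union of those seven translates.
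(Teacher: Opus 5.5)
You follow the paper's own route. You bound the hexagonal length of $P-A-Q$ by $3t$ with the triangle inequality, write lattice vectors as $aL_1+bL_3$, and keep only $(a,b)\in\{(0,0),\pm(1,0),\pm(0,1),\pm(1,1)\}$. That final list is correct; in fact every other nonzero lattice vector has length at least $3t+2$. The paper only asserts the finite check, so a correct explicit check would improve on it, but the check you wrote contains false steps:
\begin{enumerate}
\item $(a,b)=(0,\pm 3)$ gives $|x|=3t$, so ``$|x|\le 3t$ only when $|a|+|b|\le 2$'' is wrong. These pairs are excluded by $|y|=6t+3$ instead.
\item You discard $(1,1)$ even though your own numbers give it length $2t+1$. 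It is $L_2$ and must be kept; otherwise the lemma would wrongly exclude $\pm L_2$.
\item In the opposite-sign case there is no exception at $|a|=|b|=1$, because $|y|\ge t+(2t+1)=3t+1$ always.
\end{enumerate}
A clean version: by the symmetry $(a,b)\mapsto(-a,-b)$, take $a\ge 0$. If $a\ge 1$ and $b<0$, then $|y|\ge 3t+1$. If $a=0$, then $|y|=|b|(2t+1)\le 3t$ forces $|b|\le 1$. If $a\ge 1$ and $b\ge 0$, then $x=(a+b)t+a\le 3t$ forces $a+b\le 2$. This leaves $(1,0)$, $(1,1)$ and $(2,0)$, and the last has $x+y=4t+2$.

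Step~1 also needs a different justification. Under the relation $\om^2+\om+1=0$ stated in the paper, $\alpha\om=(-t,1)$ and $\alpha\om^2=(-1,-t-1)$, and the norm of $\alpha$ is $t^2+t+1\neq N$. So $\pm L_3$ is neither $\alpha\om$ nor $\alpha\om^2$, and $\alpha\Z[\om]$ is not the lattice of $\EJ_t$. Your identification holds only under $\om^2=\om-1$, where $\alpha\om=-L_3$ and $\alpha\om^2=-L_2$. That is the convention actually consistent with $N=3t^2+3t+1$ and with the distance $\max\{|x|,|y|,|x+y|\}$.

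The convention-free argument uses the labeling kernel, as the paper's proof does:
\begin{itemize}
\item $\phi(L_1)=0$ and $\phi(L_3)=N$, so both lie in $\ker\phi$;
\item $\det[L_1\;L_3]=-N$;
\item $\phi$ is onto $\Z/N\Z$ because $\phi(1,1)=-1$, so $\ker\phi$ has index $N$.
\end{itemize}
Hence $\ker\phi=\Z L_1+\Z L_3$.

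Finally, your fallback is circular. A translate $\lambda+H_t$ meets $H_{2t}$ exactly when $\|\lambda\|\le 3t$ (walk a geodesic from $0$ toward $\lambda$). So ``only seven translates meet $H_{2t}$'' is the classification itself. The tiling claim is likewise equivalent to $\Lambda\cap H_{2t}=\{0\}$, which is part of the same classification and does not follow from $|H_t|=N$ alone. The point you need there is also $D+A=P-Q$, not $D-A$.
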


\begin{proof}
The equality in the quotient means that $P-A-Q$ is a lattice vector in the kernel of the labeling map. Since $P,A,Q$ are canonical radius-$t$ representatives,
\begin{equation}
D(P-A-Q)\leq D(P)+D(A)+D(Q)\leq 3t,
\end{equation}
where $D(x,y)=\max\{|x|,|y|,|x+y|\}$. Therefore, only kernel vectors whose hexagonal length is at most $3t$ can appear in a boundary-intersection equation.

It remains to identify these short kernel vectors. The kernel lattice is generated by two independent wrap-around vectors, for example
\begin{equation}
L_1=(t+1,t),\qquad L_3=(t,-2t-1),
\end{equation}
and the third shortest direction is $L_2=L_1+L_3=(2t+1,-t-1)$. Hence every kernel vector can be written as
\begin{equation}
K(a,b)=aL_1+bL_3,\qquad a,b\in\mathbb{Z}.
\end{equation}
A direct substitution gives
\begin{equation}
K(a,b)=\big(a(t+1)+bt,\; at-b(2t+1)\big).
\end{equation}
Checking the integer pairs $(a,b)$ shows that the only nonzero vectors with $D(K(a,b))\leq 3t$ are obtained from
\begin{equation}
(a,b)\in\{(1,0),(-1,0),(0,1),(0,-1),(1,1),(-1,-1)\},
\end{equation}
which are precisely $\pm L_1$, $\pm L_3$, and $\pm L_2$. All other integer combinations have at least one of $|x|$, $|y|$, or $|x+y|$ greater than $3t$. Thus the only possible shifts in the present boundary-intersection problem are the zero vector and the six shifts in \eqref{eq:kernel-shift-set}. Hence $P-A-Q=L$ for some $L\in\Kset$, which gives $P=A+Q+L$.
\end{proof}

\subsection{Translated Side-Pair Intersection Formula}
Let
\begin{equation}
A=(a_1,a_2).
\end{equation}
For $L\in\Kset$ and $i,j\in\{1,\ldots,6\}$, define
\begin{equation}
\eta_{Lij}(A,t)=|S_i\cap(A+S_j+L)|.
\label{eq:eta-definition}
\end{equation}
Using \eqref{eq:side-definition}, a point lies in $S_i\cap(A+S_j+L)$ if and only if there exist integers $s,u$ with $0\leq s,u\leq t-1$ such that
\begin{equation}
V_i+s u_i=A+V_j+u u_j+L.
\label{eq:side-pair-equation}
\end{equation}
Equivalently,
\begin{equation}
\begin{bmatrix}
u_i & -u_j
\end{bmatrix}
\begin{bmatrix}s\\u\end{bmatrix}
=A+V_j+L-V_i,
\label{eq:linear-system-compact}
\end{equation}
where the two columns are the two-dimensional vectors $u_i$ and $-u_j$.

Let
\begin{equation}
M_{ij}=\begin{bmatrix}u_i&-u_j\end{bmatrix},
\qquad
b_{Lij}=A+V_j+L-V_i.
\end{equation}
Then \eqref{eq:linear-system-compact} becomes
\begin{equation}
M_{ij}\begin{bmatrix}s\\u\end{bmatrix}=b_{Lij}.
\label{eq:side-pair-linear}
\end{equation}
The contribution $\eta_{Lij}(A,t)$ is the number of integer solutions of \eqref{eq:side-pair-linear} satisfying $0\leq s,u\leq t-1$.

\begin{theorem}[Exact count for faults $0$ and $A$]
\label{thm:exact-count}
For any node $A\in\EJ_t$, the exact number of valid new sources for faulty nodes $0$ and $A$ is
\begin{equation}
\boxed{
\Count(A,t)=\sum_{L\in\Kset}\sum_{i=1}^{6}\sum_{j=1}^{6}\eta_{Lij}(A,t)
}
\label{eq:ej-count-formula}
\end{equation}
where $\eta_{Lij}(A,t)$ is obtained from the translated side-pair system \eqref{eq:side-pair-linear}.
\end{theorem}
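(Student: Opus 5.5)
The plan is to prove \eqref{eq:ej-count-formula} by exhibiting a bijection between the set $\B\cap(A+\B)$ of valid new sources and the set of quintuples
\[
\mathcal{T}=\{(L,i,j,s,u): L\in\Kset,\ 1\le i,j\le 6,\ 0\le s,u\le t-1,\ V_i+su_i=A+V_j+uu_j+L\}.
\]
By definition \eqref{eq:eta-definition}, together with the reformulation \eqref{eq:side-pair-linear}, $|\mathcal{T}|$ equals the triple sum on the right-hand side. Throughout, $A$ is taken as its canonical representative $\modEJ(A)$.

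First I would characterize validity in coordinates. Let $P$ be a canonical node. Then $P\in\B$ means $\dist(P,0)=t$. By \eqref{eq:distance-difference}, the condition $\dist(P,A)=t$ means that $Q:=\modEJ(P-A)$ satisfies $\dist(0,Q)=t$, i.e.\ $Q\in\B$. Hence $P$ is valid if and only if $P\in\B$ and the canonical representative $Q$ of $P-A$ lies in $\B$. Since canonical representatives are unique, this $Q$ is determined by $P$.

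Next I would define the forward map $P\mapsto(L,i,j,s,u)$. Given a valid $P$ with its $Q$ as above, we have $P\equiv A+Q\pmod{\alpha}$, and all three of $P,A,Q$ are canonical radius-$t$ representatives. Lemma~\ref{lem:finite-shifts} then gives $L\in\Kset$ with $P=A+Q+L$. This $L$ is unique because $L=P-A-Q$ as planar vectors. By the Boundary partition lemma, $P$ lies in exactly one side $S_i$ and $Q$ in exactly one side $S_j$. Since consecutive points along a side are distinct, $P=V_i+su_i$ and $Q=V_j+uu_j$ for unique $s,u\in\{0,\dots,t-1\}$. The equation $P=A+Q+L$ is exactly \eqref{eq:side-pair-equation}, so the quintuple lies in $\mathcal{T}$.

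For the inverse map, take $(L,i,j,s,u)\in\mathcal{T}$ and set $P=V_i+su_i\in S_i\subseteq\B$ and $Q=V_j+uu_j\in S_j\subseteq\B$. Because $\phi(L)=0$, we get $P-A\equiv Q$ in $\EJ_t$. Since $Q$ is canonical, $\modEJ(P-A)=Q\in\B$, so $P$ is valid. The main point to check is that distinct quintuples cannot yield the same $P$, so that no valid source is counted twice. This is where I expect the only real care is needed. The argument: $P$ determines $(i,s)$ by disjointness of the sides; $P$ determines $Q=\modEJ(P-A)$ by uniqueness of canonical representatives, and hence $(j,u)$; and $L=P-A-Q$ is then forced. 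So the two maps are mutually inverse, giving $\Count(A,t)=|\mathcal{T}|=\sum_{L}\sum_{i}\sum_{j}\eta_{Lij}(A,t)$.

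One subtlety should be stated explicitly. The seven elements of $\Kset$ are distinct planar vectors, so a pair $(P,Q)$ cannot be matched by two different shifts. This is what rules out overcounting from the quotient identification. The completeness direction rests entirely on Lemma~\ref{lem:finite-shifts}, which guarantees that no valid source is missed by restricting to the finite shift set $\Kset$.
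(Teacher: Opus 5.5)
Your proposal is correct and follows essentially the same route as the paper: you characterize validity as $P\in\B$ with $\modEJ(P-A)\in\B$, obtain the shift $L$ from Lemma~\ref{lem:finite-shifts}, assign $(i,j)$ through the disjoint six-side partition, and handle the converse through $\phi(L)=0$. Your explicit injectivity step is a useful sharpening that the paper leaves implicit: $P$ forces $Q=\modEJ(P-A)$ by uniqueness of canonical representatives, and therefore forces $(j,u)$ and $L=P-A-Q$. The paper justifies ``counted by exactly one term'' only by the half-open endpoint convention, but that convention alone does not rule out two different shifts matching the same $P$, so your uniqueness argument fills that gap.
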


\begin{proof}
A node $P$ is valid for faults $0$ and $A$ exactly when
\begin{equation}
\dist(P,0)=t
\quad\text{and}\quad
\dist(P,A)=t.
\end{equation}
The first condition means $P\in\B$. By translation invariance, the second condition means that $P-A$ represents a boundary node. Thus there exists $Q\in\B$ such that
\begin{equation}
P\equiv A+Q \pmod{\alpha}.
\end{equation}
By Lemma~\ref{lem:finite-shifts}, this quotient equality is equivalent, for boundary representatives, to
\begin{equation}
P=A+Q+L
\end{equation}
for some $L\in\Kset$. Using the disjoint boundary partition, $P\in S_i$ and $Q\in S_j$ for unique side indices $i$ and $j$. Therefore every valid source is counted by exactly one translated side-pair term $\eta_{Lij}(A,t)$, with the half-open endpoint convention preventing double counting at vertices. Conversely, every solution of a translated side-pair system gives a point $P\in\B$ satisfying $P\equiv A+Q$ with $Q\in\B$, and hence $\dist(P,A)=t$. Summing over the seven shifts and the thirty-six side pairs proves \eqref{eq:ej-count-formula}.
\end{proof}

\begin{figure}[H]
\centering
\includegraphics[width=0.6\linewidth]{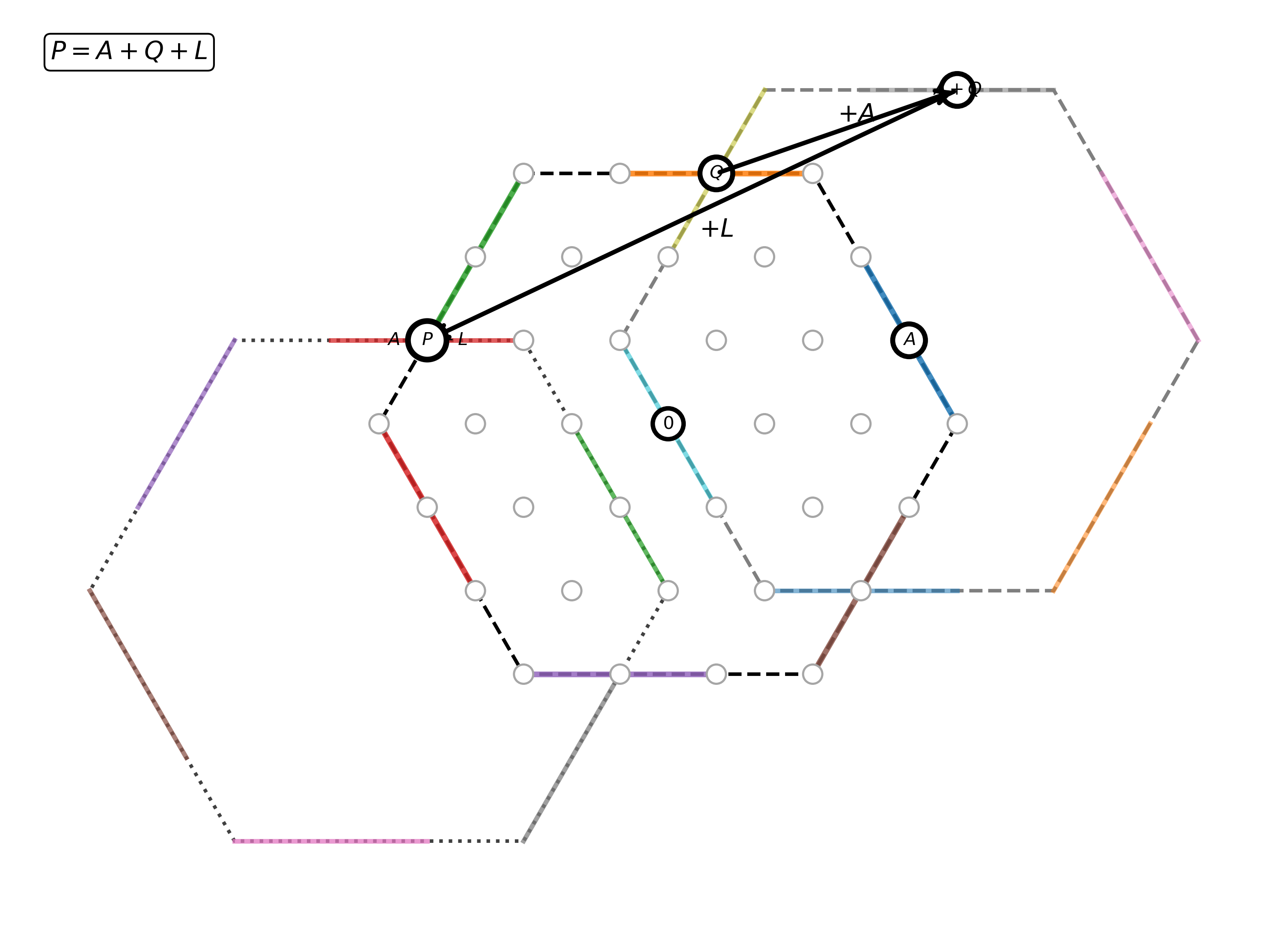}
\caption{Illustration of the quotient-lattice correction. A valid boundary intersection in the finite dense Eisenstein--Jacobi network satisfies $P\equiv A+Q\pmod{\alpha}$, which in canonical coordinates becomes $P=A+Q+L$ for a suitable lattice shift $L$. Without the shift term $L$, a planar side-pair intersection may miss wrap-around boundary intersections.}
\label{fig:quotient-shift}
\end{figure}

The count in \eqref{eq:ej-count-formula} should be interpreted as a closed finite evaluation rather than as a boundary scan. A boundary scan tests the $6t$ possible values of $P$ one by one. In contrast, the translated side-pair formula groups boundary nodes into six parametric sides and solves the corresponding intersection equations algebraically. Parallel side pairs may represent several nodes at once, and nonparallel side pairs contribute at most one node. This is why the count can be evaluated with a fixed number of arithmetic tests even though the number of boundary nodes grows linearly with $t$.

\subsection{Closed Evaluation of Each Side Pair}
The formula \eqref{eq:ej-count-formula} is constant-size because it contains at most $7\cdot 6\cdot 6=252$ translated side-pair terms. Each term is evaluated as follows.

If $\det(M_{ij})\neq 0$, then \eqref{eq:side-pair-linear} has at most one solution. Let
\begin{equation}
\begin{bmatrix}s^*\\u^*\end{bmatrix}=M_{ij}^{-1}b_{Lij}.
\end{equation}
Then
\begin{equation}
\eta_{Lij}(A,t)=
\begin{cases}
1, & s^*,u^*\in\Z,\ 0\leq s^*,u^*\leq t-1,\\
0, & \text{otherwise.}
\end{cases}
\label{eq:eta-nonparallel}
\end{equation}
If $\det(M_{ij})=0$, then the two sides are parallel. In this case, either the equations are inconsistent and $\eta_{Lij}(A,t)=0$, or they reduce to one linear equation in two bounded integer parameters. The feasible set is an integer interval, and its size is
\begin{equation}
\eta_{Lij}(A,t)=\max\{0,U-L+1\},
\label{eq:eta-parallel}
\end{equation}
where $L$ and $U$ are the lower and upper bounds obtained from $0\leq s,u\leq t-1$ and the remaining linear equation.

\subsection{Algorithmic Evaluation of the Count}
Algorithm~\ref{alg:count-ej} evaluates \eqref{eq:ej-count-formula} directly. It is included separately from the source-selection algorithm because it answers a different question: when the faulty nodes are $0$ and $A$, it returns the number of valid new-source choices, not just one choice.

\begin{algorithm}[H]
\caption{Constant-Time Count of Valid New Sources for Faults $0$ and $A$}
\label{alg:count-ej}
\begin{algorithmic}[1]
\REQUIRE Network diameter $t$; node $A\in\EJ_t$
\ENSURE The number $\Count(A,t)$ of valid new sources for faults $0$ and $A$
\STATE $Q\leftarrow 0$
\STATE Construct $\Kset=\{(0,0),\pm(t+1,t),\pm(2t+1,-t-1),\pm(t,-2t-1)\}$
\FORALL{$L\in\Kset$}
    \FOR{$i=1$ to $6$}
        \FOR{$j=1$ to $6$}
            \STATE $M\leftarrow [u_i\; -u_j]$
            \STATE $b\leftarrow A+V_j+L-V_i$
            \IF{$\det(M)\neq 0$}
                \STATE Solve $M[s\;u]^T=b$ by Cramer's rule
                \IF{$s,u$ are integers and $0\leq s,u\leq t-1$}
                    \STATE $Q\leftarrow Q+1$
                \ENDIF
            \ELSE
                \STATE Check consistency of the parallel system
                \IF{the parallel system is consistent}
                    \STATE Compute the feasible interval $[L_b,U_b]$
                    \IF{$L_b\leq U_b$}
                        \STATE $Q\leftarrow Q+(U_b-L_b+1)$
                    \ENDIF
                \ENDIF
            \ENDIF
        \ENDFOR
    \ENDFOR
\ENDFOR
\RETURN $Q$
\end{algorithmic}
\end{algorithm}

\begin{theorem}[Correctness and complexity of the counting algorithm]
Algorithm~\ref{alg:count-ej} returns the exact value of $\Count(A,t)$ and runs in $O(1)$ time with respect to $t$ and $N$ under the fixed-word arithmetic model.
\end{theorem}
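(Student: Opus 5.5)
The plan is to split the statement into correctness and complexity, and to reduce correctness to Theorem~\ref{thm:exact-count}. That theorem already establishes $\Count(A,t)=\sum_{L\in\Kset}\sum_{i,j}\eta_{Lij}(A,t)$. It therefore suffices to show two things: the three nested loops of Algorithm~\ref{alg:count-ej} range exactly over the index set $\Kset\times\{1,\ldots,6\}^2$, and in each iteration the amount added to $Q$ equals $\eta_{Lij}(A,t)$. The first point is immediate from the loop structure and from line~2, which constructs $\Kset$ exactly as in \eqref{eq:kernel-shift-set}. Since $Q$ starts at $0$ and is only incremented inside the loops, the returned value is the triple sum.

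For the per-iteration claim, I will split on $\det(M_{ij})$, matching the closed evaluation in \eqref{eq:eta-nonparallel} and \eqref{eq:eta-parallel}.

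\emph{Nonparallel case.} Suppose $\det(M_{ij})\neq 0$. Then the real system \eqref{eq:side-pair-linear} has the unique solution $(s^*,u^*)=M_{ij}^{-1}b_{Lij}$, which Cramer's rule computes. The integer solutions in the box $0\le s,u\le t-1$ are therefore either $\{(s^*,u^*)\}$ or empty. The test in the algorithm adds exactly $1$ in the first case and $0$ in the second.

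\emph{Parallel case.} Suppose $\det(M_{ij})=0$. Because every $u_i$ is a unit direction in $\{(\pm1,0),(0,\pm1),\pm(1,-1)\}$, this happens exactly when $u_j=\pm u_i$, that is, for $j=i$ or $j=i\pm 3$. Then $b_{Lij}$ must be a multiple of $u_i$, which is the consistency check. If the check fails there are no solutions. If it passes, the system reduces to the single equation $s\mp u=c$, where $c$ is the integer with $b_{Lij}=c\,u_i$. Substituting $s=c\pm u$ into $0\le s,u\le t-1$ gives an interval $[L_b,U_b]$ for $u$, whose endpoints are maxima and minima of a few integer expressions. Each feasible $u$ gives exactly one $s$, so the number of integer solutions is $\max\{0,U_b-L_b+1\}$, which is what the algorithm adds.

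\emph{Where the real difficulty lies.} The main subtlety is not the algorithm but the fact that the triple sum counts \emph{nodes} rather than \emph{solutions}. In other words, a valid $P$ must not arise from two different pairs $(Q,L)$. Theorem~\ref{thm:exact-count} is assumed here and asserts this. If I wanted to reinforce it, I would argue as follows. Since $P$ is a fixed canonical representative, $Q=P-A-L$ must be the canonical representative of $P-A$. Canonical representatives in the radius-$t$ hexagon are unique because $|\B|$ plus the interior accounts for all $N=3t^2+3t+1$ residues. Hence $L$ is determined, and by Lemma~1 the side index $j$ of $Q$ is determined too. I would cite this rather than redo it.

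\emph{Complexity.} The loops perform $|\Kset|\cdot 36=252$ iterations, a constant independent of $t$ and $N$. Each iteration does a constant number of operations: a $2\times2$ determinant, Cramer's rule with one divisibility test, or a constant number of comparisons and max/min operations for the interval. All operands are bounded in magnitude by $O(t)$; for instance the entries of $b_{Lij}$ are at most about $5t$. They therefore fit in a fixed word under the stated model, so each operation costs $O(1)$. The total running time is $O(1)$.
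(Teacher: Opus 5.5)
Your proposal is correct and follows the same route as the paper's proof. Both show that each of the $252$ iterations adds exactly $\eta_{Lij}(A,t)$ by splitting on whether $\det(M_{ij})$ vanishes, invoke Theorem~\ref{thm:exact-count} for the triple sum, and bound the running time by the fixed loop size times constant work per system. Your extra details are sound refinements that the paper leaves implicit or places in the proof of Theorem~\ref{thm:exact-count}: parallel pairs occur exactly for $j\in\{i,i\pm3\}$, such systems reduce to $s\mp u=c$, each valid node is counted once via uniqueness of canonical representatives, and all operands are $O(t)$.
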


\begin{proof}
For each shift $L\in\Kset$ and each pair $(i,j)$, the algorithm solves exactly the translated side-pair system
\begin{equation}
M_{ij}\begin{bmatrix}s\\u\end{bmatrix}=A+V_j+L-V_i
\end{equation}
with the restrictions $0\leq s,u\leq t-1$. If $\det(M_{ij})\neq 0$, the two sides are nonparallel and there is at most one intersection point; the algorithm adds one precisely when that unique solution is integral and lies in the two parameter ranges. If $\det(M_{ij})=0$, the two sides are parallel; the algorithm first tests consistency and then counts the feasible integer interval by $U_b-L_b+1$. Therefore the value added for $(L,i,j)$ is exactly $\eta_{Lij}(A,t)$.

The algorithm sums these values over all seven shifts and all thirty-six side pairs. By Theorem~\ref{thm:exact-count}, the sum is exactly $\Count(A,t)$. The running time is constant because the loops range over the fixed set $\Kset\times\{1,\ldots,6\}\times\{1,\ldots,6\}$, which has $7\cdot6\cdot6=252$ elements, and each translated side-pair evaluation uses only a fixed number of arithmetic, range-checking, and interval-bound operations. No node scan and no boundary scan is performed. Hence the complexity is $O(1)$ under the fixed-word arithmetic model.
\end{proof}

\begin{figure}[H]
\centering
\includegraphics[width=0.6\linewidth]{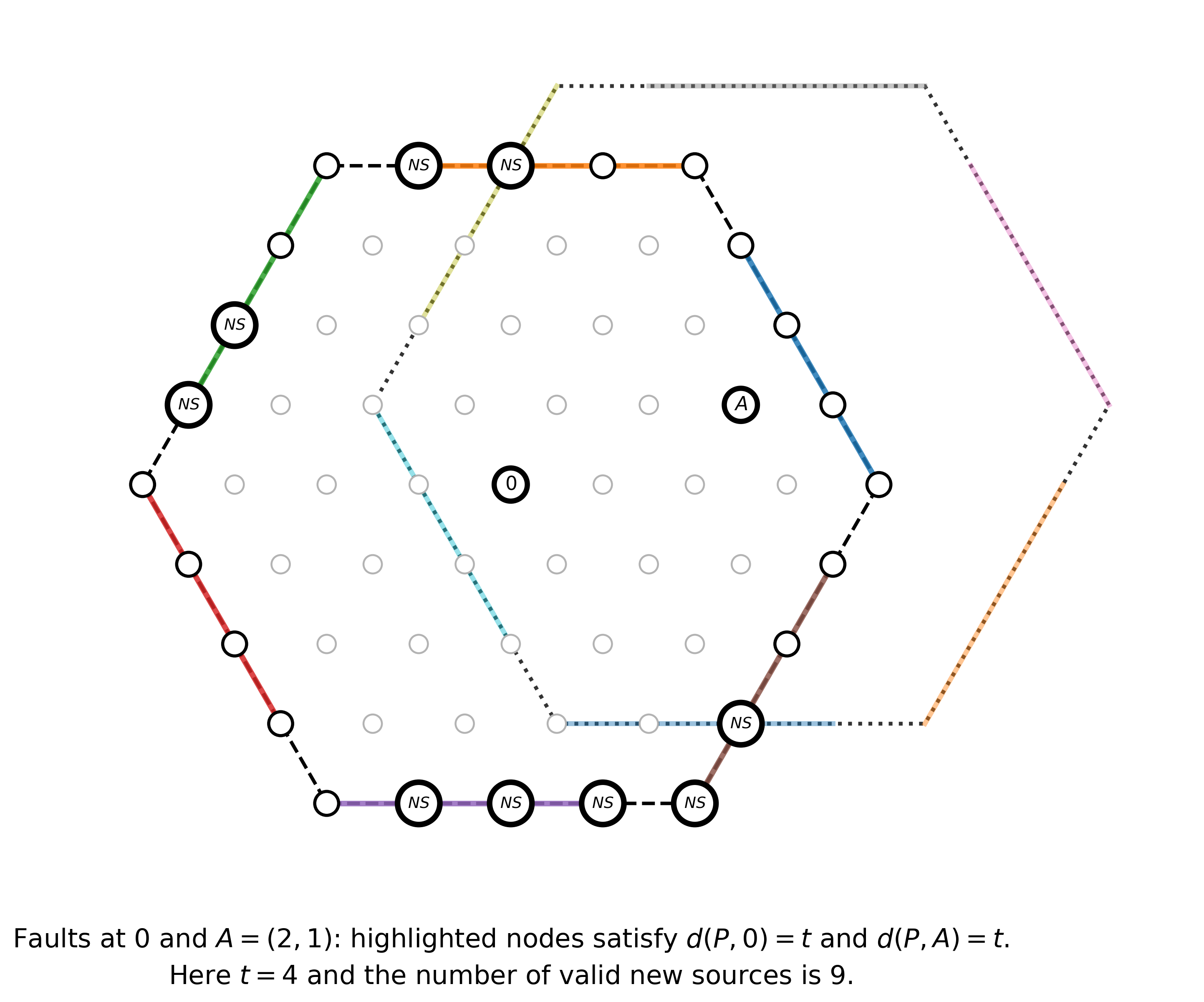}
\caption{Counting valid new sources for faults $0$ and $A$. The highlighted boundary nodes are exactly those $P$ satisfying $\dist(P,0)=t$ and $\dist(P,A)=t$. In the example shown, $t=4$, $A=(2,1)$, and the number of valid new sources is six.}
\label{fig:count-intersection}
\end{figure}

\begin{corollary}[Existence from the count]
If the re-rooting existence theorem guarantees a common distance-$t$ node for faults $0$ and $A$, then
\begin{equation}
\Count(A,t)>0.
\end{equation}
Moreover, \eqref{eq:ej-count-formula} gives the exact number of available choices.
\end{corollary}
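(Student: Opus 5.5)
The plan is to read the corollary as an immediate consequence of Theorem~\ref{thm:exact-count} together with the definition \eqref{eq:count-definition}. No new geometry should be needed. The argument has two parts, matching the two sentences of the statement.

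\textbf{Positivity.} Assume the re-rooting existence theorem of \cite{albader2026ejrerooting} applies to the faults $0$ and $A$. Then it supplies a node $P\in\EJ_t$ with $\dist(P,0)=t$ and $\dist(P,A)=t$.
\begin{enumerate}
\item The first equality says $P\in\B$ by \eqref{eq:boundary}.
\item By translation invariance \eqref{eq:translation-invariance}, the second equality says $\modEJ(P-A)\in\B$. Writing $Q$ for this representative gives $P=\modEJ(A+Q)$, so $P\in A+\B$.
\item Hence $\B\cap(A+\B)\neq\emptyset$, and $\Count(A,t)\geq 1$ by \eqref{eq:count-definition}.
\end{enumerate}
This step works directly with the definition of $\Count$ and does not need the side-pair formula at all.

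\textbf{Exactness.} The second sentence follows from Theorem~\ref{thm:exact-count}, which equates $\Count(A,t)$ with the finite sum \eqref{eq:ej-count-formula}. Since $\Count(A,t)$ is by definition the number of valid new sources for faults $0$ and $A$, the sum counts exactly the available choices. As a consequence, at least one term $\eta_{Lij}(A,t)$ is positive, so some translated side-pair system \eqref{eq:side-pair-linear} has a feasible solution with $0\leq s,u\leq t-1$. This is the fact the later selector relies on.

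\textbf{Main obstacle.} The only delicate point is making sure the hypothesis is really the same condition as membership in $\B\cap(A+\B)$ in the quotient, rather than a planar condition on raw coordinates. I will handle this by working only with canonical representatives via $\modEJ$ and invoking \eqref{eq:distance-difference} explicitly. Any wrap-around subtleties are already absorbed into Theorem~\ref{thm:exact-count} through Lemma~\ref{lem:finite-shifts}, so the corollary can rely on them without re-proving them.
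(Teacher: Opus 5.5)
Your proposal is correct and follows the paper's argument: the existence theorem places a node in $\B\cap(A+\B)$, so $\Count(A,t)\geq 1$ by definition, and Theorem~\ref{thm:exact-count} identifies this count with the finite side-pair sum. Your explicit unpacking of membership in $A+\B$ through $Q=\modEJ(P-A)$ is a more careful version of the paper's one-line step, not a different route.
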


\begin{proof}
The re-rooting existence theorem gives at least one node in $\B\cap(A+\B)$. Theorem~\ref{thm:exact-count} counts that intersection exactly, so the count is positive and equals the number of choices.
\end{proof}

\subsection{Worked Counting Examples for $t=3$}
For $t=3$, the network contains
\begin{equation}
N=3(3)^2+3(3)+1=37
\end{equation}
nodes and the boundary contains $6t=18$ nodes.

\textbf{Example 1: faults $0$ and $A=(1,0)$.}
A direct evaluation of \eqref{eq:ej-count-formula} gives
\begin{equation}
\Count((1,0),3)=13.
\end{equation}
The thirteen valid choices are
\begin{equation}
\begin{gathered}
(0,3),\ (-1,3),\ (-2,3),\ (-3,3),\ (-3,2),\ (-3,1),\ (-3,0),\\
(-2,-1),\ (-1,-2),\ (0,-3),\ (1,-3),\ (2,-3),\ (3,-3).
\end{gathered}
\end{equation}
Each of these points is at distance $3$ from both $0$ and $(1,0)$.

\textbf{Example 2: faults $0$ and $A=(2,0)$.}
The count is
\begin{equation}
\Count((2,0),3)=8.
\end{equation}
The valid choices are
\begin{equation}
\begin{gathered}
(0,3),\ (-1,3),\ (-2,3),\ (-3,1),\ (-3,0),\ (1,-3),\\
 (2,-3),\ (3,-3).
\end{gathered}
\end{equation}

\textbf{Example 3: faults $0$ and $A=(1,1)$.}
The count is
\begin{equation}
\Count((1,1),3)=9,
\end{equation}
with valid choices
\begin{equation}
\begin{gathered}
(-2,3),\ (-3,3),\ (-3,2),\ (-3,1),\ (0,-3),\\
 (1,-3),\ (2,-3),\ (3,-3),\ (3,-2).
\end{gathered}
\end{equation}
These examples show why the counting section is useful: the valid new source is not only guaranteed to exist, but the exact number of available candidates can also be computed without scanning the boundary.

\section{Direct Constant-Time New-Source Selection}
\label{sec:direct-selection}

This section presents the main selection algorithm. It selects one valid new source for two arbitrary faulty nodes by translating the problem, solving a fixed number of Eisenstein--Jacobi side-pair systems, verifying the candidate, and shifting back.

\subsection{Translation to the Origin}
Let $A,B\in\EJ_t$ be two faulty nodes. Define
\begin{equation}
C=\modEJ(B-A).
\end{equation}
If $C=(c_1,c_2)$, the translated problem is to find $P$ such that
\begin{equation}
\dist(P,0)=t,
\qquad
\dist(P,C)=t.
\end{equation}
Once $P$ is found, set
\begin{equation}
NS=\modEJ(A+P).
\end{equation}

\begin{lemma}[Translation correctness]
If $P$ satisfies $\dist(P,0)=t$ and $\dist(P,C)=t$, where $C=\modEJ(B-A)$, then $NS=\modEJ(A+P)$ satisfies
\begin{equation}
\dist(NS,A)=t,
\qquad
\dist(NS,B)=t.
\end{equation}
\end{lemma}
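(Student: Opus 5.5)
The plan is to derive both equalities from translation invariance \eqref{eq:translation-invariance} and the difference formula \eqref{eq:distance-difference}. The only extra fact needed is that $\modEJ(\cdot)$ is compatible with addition in the quotient. That is, $\modEJ(X+Y)=\modEJ(\modEJ(X)+Y)$, because both sides represent the same residue class modulo $\alpha$. I will use this silently whenever a nested reduction appears.

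\textbf{First equality.} Apply \eqref{eq:translation-invariance} with $U=P$, $V=0$, $T=A$. This gives
\begin{equation*}
\dist(P,0)=\dist(\modEJ(A+P),\modEJ(A))=\dist(NS,A),
\end{equation*}
since $A$ is already a canonical representative. The hypothesis $\dist(P,0)=t$ then yields $\dist(NS,A)=t$. Alternatively, by \eqref{eq:distance-difference}, $\dist(NS,A)=\dist(0,\modEJ(A+P-A))=\dist(0,\modEJ(P))=\dist(P,0)$, which gives the same conclusion.

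\textbf{Second equality.} Apply \eqref{eq:translation-invariance} with $U=P$, $V=C$, $T=A$. This gives $\dist(P,C)=\dist(NS,\modEJ(A+C))$. It remains to identify $\modEJ(A+C)$ with $B$. Since $C\equiv B-A\pmod{\alpha}$, we have $A+C\equiv B\pmod{\alpha}$. Because $B$ is a canonical representative, it follows that $\modEJ(A+C)=B$. Hence $\dist(NS,B)=\dist(P,C)=t$.

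The only point needing care is this last identification: that the canonical reduction sends $A+C$ back to $B$. This relies on the canonical region being a complete, non-redundant set of representatives of $\Z[\om]/\alpha\Z[\om]$, which the definition of $\modEJ$ assumes. With that in hand, no further obstacle remains.
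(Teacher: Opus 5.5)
Your proof is correct and follows the paper's own argument: both apply translation invariance with shift $A$ to obtain $\dist(NS,A)=\dist(P,0)$ and $\dist(NS,B)=\dist(\modEJ(A+P),\modEJ(A+C))=\dist(P,C)$, using $\modEJ(A+C)=B$. Your explicit note that this identification relies on the uniqueness of canonical representatives is a detail the paper leaves implicit.
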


\begin{proof}
Using translation invariance,
\begin{equation}
\dist(NS,A)=\dist(\modEJ(A+P),A)=\dist(P,0)=t.
\end{equation}
Since $C=\modEJ(B-A)$, we have $B=\modEJ(A+C)$. Therefore
\begin{equation}
\begin{gathered}
\dist(NS,B)=\dist(\modEJ(A+P),\\
\modEJ(A+C))=\dist(P,C)=t.
\end{gathered}
\end{equation}
Thus $NS$ is at distance $t$ from both faults.
\end{proof}

\subsection{Side-Pair Algebra}
The point $P$ must lie in
\begin{equation}
\B\cap(C+\B).
\end{equation}
Therefore, for some shift $L\in\Kset$ and some side pair $(i,j)$,
\begin{equation}
P\in S_i\cap(C+S_j+L).
\end{equation}
Using the same side parameterization, this means
\begin{equation}
V_i+s u_i=C+V_j+u u_j+L,
\label{eq:selector-side-equation}
\end{equation}
with integers $0\leq s,u\leq t-1$. This is again the two-by-two system
\begin{equation}
M_{ij}\begin{bmatrix}s\\u\end{bmatrix}=C+V_j+L-V_i.
\label{eq:selector-linear-system}
\end{equation}
The algorithm checks at most $7\cdot6\cdot6=252$ translated systems. A nonparallel system gives one possible candidate. A parallel system gives either no candidate or an interval of candidates. In either case, only constant-time arithmetic is required.

\subsection{Algorithm 2: Direct Side-Pair Selector}
Algorithm~\ref{alg:direct-ej} is the proposed constant-time selector. It checks a fixed set of translated hexagonal side pairs and therefore does not enumerate the full node set or the distance-$t$ boundary.

\begin{algorithm}[H]
\caption{Constant-Time Direct New-Source Selection in $\EJ_t$}
\label{alg:direct-ej}
\begin{algorithmic}[1]
\REQUIRE Network diameter $t$; faulty nodes $A,B\in\EJ_t$
\ENSURE A valid new source $NS$
\STATE $C\leftarrow \modEJ(B-A)$
\STATE Construct $\Kset=\{(0,0),\pm(t+1,t),\pm(2t+1,-t-1),\pm(t,-2t-1)\}$
\FORALL{$L\in\Kset$}
    \FOR{$i=1$ to $6$}
        \FOR{$j=1$ to $6$}
            \STATE $M\leftarrow [u_i\; -u_j]$
            \STATE $b\leftarrow C+V_j+L-V_i$
            \IF{$\det(M)\neq 0$}
                \STATE Solve $M[s\;u]^T=b$ by Cramer's rule
                \IF{$s,u$ are integers and $0\leq s,u\leq t-1$}
                    \STATE $P\leftarrow V_i+s u_i$
                    \IF{$\dist(P,0)=t$ and $\dist(P,C)=t$}
                        \RETURN $\modEJ(A+P)$
                    \ENDIF
                \ENDIF
            \ELSE
                \STATE Compute the feasible interval $[L_b,U_b]$ for the parallel system
                \IF{$L_b\leq U_b$}
                    \STATE Choose $s\leftarrow L_b$ and compute the corresponding $u$
                    \STATE $P\leftarrow V_i+s u_i$
                    \IF{$\dist(P,0)=t$ and $\dist(P,C)=t$}
                        \RETURN $\modEJ(A+P)$
                    \ENDIF
                \ENDIF
            \ENDIF
        \ENDFOR
    \ENDFOR
\ENDFOR
\STATE \textbf{return} failure
\end{algorithmic}
\end{algorithm}

\begin{figure}[H]
\centering
\includegraphics[width=0.8\linewidth]{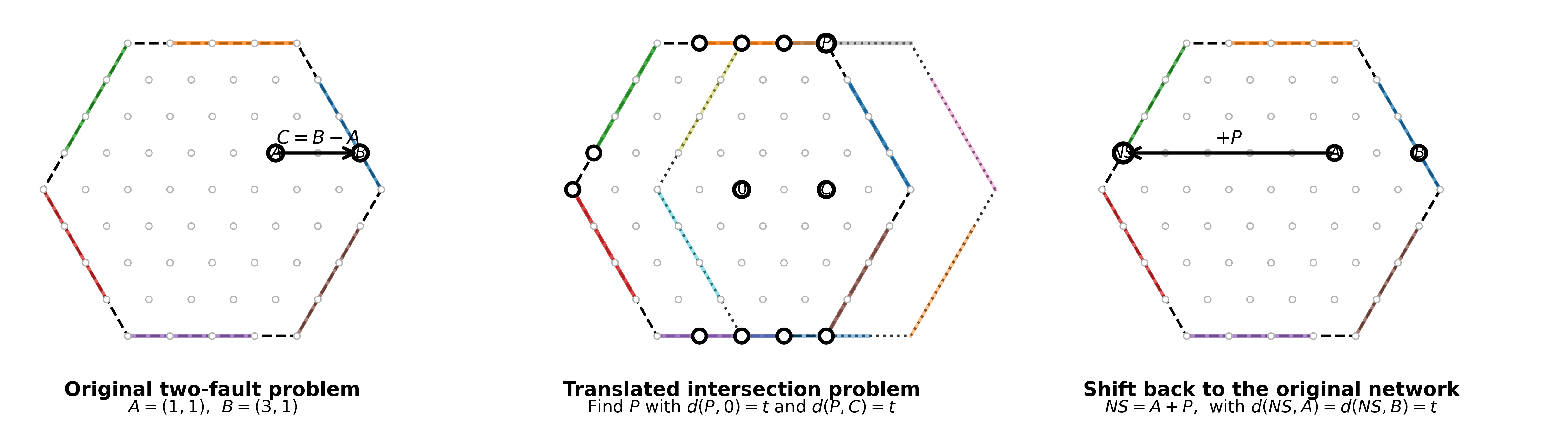}
\caption{Direct new-source selection workflow. The original faults $A$ and $B$ are translated to $0$ and $C=B-A$; a valid translated point $P$ is selected from the boundary intersection; and the final source is obtained by shifting back, $NS=A+P$.}
\label{fig:translation}
\end{figure}

\subsection{Correctness Proofs}
\begin{theorem}[Soundness]
If Algorithm~\ref{alg:direct-ej} returns a node $NS$, then $NS$ is a valid new source for the faulty nodes $A$ and $B$.
\end{theorem}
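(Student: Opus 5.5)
The argument is a direct trace through Algorithm~\ref{alg:direct-ej}: it has exactly two return statements that output a node, and each is guarded by an explicit distance check. The plan is to show that any node passing that check satisfies the hypotheses of the translation correctness lemma, which then gives the conclusion. We therefore never need to use the side-pair algebra, the shift set $\Kset$, or Lemma~\ref{lem:finite-shifts}. Soundness depends only on the final verification, not on how the candidate $P$ was generated.

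\textbf{Step 1.} First I will observe that the algorithm returns a node only at one of its two \textbf{return} lines. One is in the nonparallel branch and one is in the parallel branch. Each is executed only after the test $\dist(P,0)=t$ and $\dist(P,C)=t$ succeeds for the current candidate $P=V_i+s u_i$. Here $C=\modEJ(B-A)$ is computed in the first line and never modified afterward. The final ``failure'' return outputs no node, so it is excluded by the hypothesis of the theorem.

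\textbf{Step 2.} Next I will make sure the test is evaluated on a genuine node, so that the distances are the network distances of \eqref{eq:distance-difference}. The candidate $P$ is an integer coordinate point, because $s$ is checked to be an integer in the nonparallel case and is an integer endpoint $L_b$ in the parallel case. It is therefore interpreted as the node $\modEJ(P)$. By \eqref{eq:distance-difference}, both distances are computed on reduced differences. Since $P\in S_i\subseteq\B$ is already a canonical boundary representative, this reduction is harmless. Even if it were not, interpreting $P$ as $\modEJ(P)$ does not change $\modEJ(A+P)$.

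\textbf{Step 3.} Finally I will apply the translation correctness lemma to $P$ and $C$. This gives $\dist(NS,A)=t$ and $\dist(NS,B)=t$ for the returned $NS=\modEJ(A+P)$, which is exactly condition \eqref{eq:valid-source-intro}. Hence $NS$ is a valid new source.

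\textbf{Main obstacle.} The only delicate point is Step~2: making sure that the verification in the algorithm uses the quotient distance rather than a planar one, so that the lemma's hypotheses hold literally. I will handle this by stating explicitly that the distances $\dist(\cdot,\cdot)$ in the algorithm's test are the ones defined by \eqref{eq:distance-difference}.
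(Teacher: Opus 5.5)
Your proposal is correct and is essentially the paper's proof: both note that every node-returning line is guarded by the explicit check $\dist(P,0)=t$ and $\dist(P,C)=t$, and then apply the translation correctness lemma to $NS=\modEJ(A+P)$. Your Step~2 spells out something the paper leaves implicit, namely that the check uses the quotient distance and that $P$ is a genuine node; it is a harmless added precision.
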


\begin{proof}
The algorithm returns only after constructing a point $P$ and verifying
\begin{equation}
\dist(P,0)=t,
\qquad
\dist(P,C)=t.
\end{equation}
By the translation correctness lemma, the shifted node $NS=\modEJ(A+P)$ satisfies
\begin{equation}
\dist(NS,A)=t,
\qquad
\dist(NS,B)=t.
\end{equation}
Therefore the returned node is valid.
\end{proof}

\begin{theorem}[Completeness under the re-rooting existence condition]
For any two distinct faulty nodes $A,B\in\EJ_t$ for which the re-rooting existence theorem guarantees a common distance-$t$ node, Algorithm~\ref{alg:direct-ej} returns a valid new source.
\end{theorem}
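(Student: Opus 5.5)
The plan is to run the counting argument of Theorem~\ref{thm:exact-count} in reverse. We need a point $P$ in $\B\cap(C+\B)$ that the loop structure of Algorithm~\ref{alg:direct-ej} is forced to encounter and accept.

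\textbf{Step 1: a valid point exists.} By hypothesis the re-rooting existence theorem gives a node $NS^\ast$ with $\dist(NS^\ast,A)=\dist(NS^\ast,B)=t$. Set $P^\ast=\modEJ(NS^\ast-A)$. Translation invariance \eqref{eq:translation-invariance} gives $\dist(P^\ast,0)=t$ and $\dist(P^\ast,C)=t$, with $C=\modEJ(B-A)$. Hence $P^\ast\in\B\cap(C+\B)$, and in particular the translated intersection is nonempty.

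\textbf{Step 2: place the point in a side-pair system.} Since $\dist(P^\ast,C)=t$, the canonical representative $Q^\ast=\modEJ(P^\ast-C)$ lies in $\B$ and satisfies $P^\ast\equiv C+Q^\ast \pmod{\alpha}$. Lemma~\ref{lem:finite-shifts} then gives some $L^\ast\in\Kset$ with $P^\ast=C+Q^\ast+L^\ast$. The boundary partition lemma places $P^\ast\in S_{i^\ast}$ and $Q^\ast\in S_{j^\ast}$ with parameters $s^\ast,u^\ast\in\{0,\ldots,t-1\}$. Therefore $(s^\ast,u^\ast)$ is an integer solution of system \eqref{eq:selector-linear-system} for the triple $(L^\ast,i^\ast,j^\ast)$. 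This is exactly the argument in the proof of Theorem~\ref{thm:exact-count}, equivalently the statement $\Count(C,t)>0$ from the corollary.

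\textbf{Step 3: the algorithm accepts when it reaches that triple.} Suppose the algorithm has not returned earlier. If returning earlier, it returned a valid source by the soundness theorem, and we are done. So assume it reaches $(L^\ast,i^\ast,j^\ast)$. Two cases arise.
\begin{itemize}
\item \emph{Nonparallel case.} Cramer's rule yields the unique solution, which must be $(s^\ast,u^\ast)$. That solution is integral and in range, so the algorithm builds $P=P^\ast$.
\item \emph{Parallel case.} The system is consistent, and the feasible interval contains $s^\ast$, so $L_b\le U_b$. The algorithm takes $s=L_b$ with its matching $u$. This produces a point $P=V_{i^\ast}+L_b u_{i^\ast}\in S_{i^\ast}\subseteq\B$, with $P=C+(V_{j^\ast}+u\,u_{j^\ast})+L^\ast$ and $V_{j^\ast}+u\,u_{j^\ast}\in S_{j^\ast}\subseteq\B$.
\end{itemize}

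\textbf{Step 4: the verification test passes.} In either case $P\in\B$ gives $\dist(P,0)=t$ directly. We also have $P-C\equiv Q\pmod\alpha$ for some $Q\in\B$. So $\modEJ(P-C)=\modEJ(Q)$, and $\dist(P,C)=t$ holds provided boundary coordinates are canonical representatives (Step 5). The test therefore succeeds, and the algorithm returns a node, which is valid by soundness. Since this triple is visited unless an earlier return occurs, the algorithm never reaches ``return failure.''

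\textbf{Step 5: the main obstacle.} The delicate point is the parallel case, together with the claim that a solution of the planar system $P=C+Q+L$ with $Q\in\B$ really has $\dist(P,C)=t$ in the quotient. The worry is that $Q$, written as a radius-$t$ point, might be reduced by $\modEJ$ to a representative of different length. My first attempt is to argue that the canonical region is the radius-$t$ hexagon and that the reduction preserves $D$. Concretely, $\phi$ is injective on the hexagon, since $N=3t^2+3t+1$ equals the number of lattice points with $D\le t$. So every point with $D(Q)=t$ is its own canonical representative, and $\dist(0,\modEJ(P-C))=D(Q)=t$. Should canonical representatives instead be chosen differently on parts of the boundary, I would fall back on the fact that the explicit check $\dist(P,C)=t$ in the algorithm is exactly the membership condition. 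The argument then only needs Step~2 to show that $P^\ast$ itself, the nonparallel case, or some point of the feasible interval is checked. If necessary, I would refine the parallel case by noting that every point of the interval is valid by the same computation, so choosing $s=L_b$ loses nothing.
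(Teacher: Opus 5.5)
Your proposal is correct and follows the same route as the paper's proof: existence of a translated point, Lemma~\ref{lem:finite-shifts} plus the side partition to locate a triple $(L,i,j)$, and the nonparallel/parallel case split at that triple. Your Step~5 adds a justification the paper leaves implicit, namely that any feasible point of the parallel interval passes the distance check because boundary points are their own canonical representatives. The cleanest support for this is that every nonzero kernel vector has $D\ge 2t+1$ (by the enumeration in Lemma~\ref{lem:finite-shifts}) while differences of hexagon points have $D\le 2t$; the cardinality count alone does not give injectivity.
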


\begin{proof}
Let $C=\modEJ(B-A)$. By the re-rooting existence theorem, there exists a point $P\in\EJ_t$ such that
\begin{equation}
\dist(P,0)=t,
\qquad
\dist(P,C)=t.
\end{equation}
Thus $P\in\B$ and $P\equiv C+Q \pmod{\alpha}$ for some $Q\in\B$. By Lemma~\ref{lem:finite-shifts}, there is a shift $L\in\Kset$ such that $P=C+Q+L$. Since the six-side partition is disjoint, $P$ lies on a unique side $S_i$ of $\B$, and $Q$ lies on a unique side $S_j$ of $\B$. Hence $P\in S_i\cap(C+S_j+L)$ for some triple $(L,i,j)$.

Algorithm~\ref{alg:direct-ej} checks every triple $(L,i,j)$. For the translated side pair containing $P$, equation \eqref{eq:selector-linear-system} has a feasible integer solution. If the two sides are nonparallel, the solution is unique and Cramer's rule recovers it. If the two sides are parallel, the feasible solutions form an integer interval, and the interval computation contains at least one feasible endpoint or interior point. The algorithm selects one feasible point, verifies the two distance equations, and returns $NS=\modEJ(A+P)$. By the soundness theorem, the returned node is valid.
\end{proof}

\begin{theorem}[Constant-time complexity]
Algorithm~\ref{alg:direct-ej} runs in $O(1)$ time with respect to $t$ and $N$ under the fixed-word arithmetic model.
\end{theorem}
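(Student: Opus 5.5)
The plan is to bound the work of Algorithm~\ref{alg:direct-ej} line by line and show that no step depends on $t$ or $N$ except through the size of the integers involved. Under the fixed-word model these integers are $O(t)$, or $O(N)$ for labels, and each fits in a word. First we account for the preprocessing. Computing $C=\modEJ(B-A)$ takes one coordinate subtraction and one canonical reduction. For the reduction we use the integer labeling \eqref{eq:integer-label}: compute $\phi(B)-\phi(A)\bmod N$, then recover the representative by adding at most one shift from $\Kset$. This is possible by the same short-vector argument as in Lemma~\ref{lem:finite-shifts}, since $B-A$ has hexagonal length at most $2t$. Alternatively we simply test the seven candidates $B-A+L$, $L\in\Kset$, and keep the one with $D\le t$. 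Either way the cost is a constant number of arithmetic operations. Constructing $\Kset$ writes down seven fixed vectors whose coordinates are affine in $t$.

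Next we bound the main loop. The three nested loops range over $\Kset\times\{1,\dots,6\}\times\{1,\dots,6\}$, which has exactly $252$ triples regardless of $t$. For each triple, forming $M_{ij}$ and $b_{Lij}=C+V_j+L-V_i$ takes a constant number of additions. The determinant of a $2\times2$ matrix with entries in $\{-1,0,1\}$ takes constant time. In the nonparallel case, Cramer's rule uses two $2\times2$ determinants and two divisibility tests, plus four comparisons for the range checks. In the parallel case we argue that the feasible interval $[L_b,U_b]$ comes from a constant number of operations. The consistency check is a single linear condition on $b_{Lij}$. The remaining equation expresses $u$ as $\pm s$ plus a constant, so the bounds are the max and min of four affine expressions coming from $0\le s,u\le t-1$. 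The interval is therefore never enumerated, and only its endpoint $L_b$ is used.

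Then each candidate $P$ is verified. Computing $\dist(P,0)$ takes three absolute values and a max. Computing $\dist(P,C)$ requires reducing $P-C$, which again costs constant time by the seven-shift argument above. The final $\modEJ(A+P)$ is one more such reduction. Summing gives at most $252$ iterations, each with $O(1)$ cost, plus $O(1)$ preprocessing and postprocessing, so the total is $O(1)$.

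The main obstacle is justifying that $\modEJ$ itself is constant time rather than a search. I would handle it by invoking Lemma~\ref{lem:finite-shifts}'s enumeration of short kernel vectors. Any sum or difference of two canonical representatives has $D\le 2t\le 3t$, so its canonical representative differs from it by an element of $\Kset$, and at most seven candidates need to be checked. A secondary point is making explicit that intermediate values stay within $O(t)$ in magnitude, so that the fixed-word assumption applies to every arithmetic operation.
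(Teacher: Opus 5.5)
Your proposal is correct and follows essentially the same route as the paper: the loops range over the fixed set of $7\cdot6\cdot6=252$ triples, and each translated side-pair system costs a constant number of arithmetic operations. You go a bit further than the paper by justifying that every $\modEJ$ reduction (for $C$, inside $\dist(P,C)$, and for the returned node) is constant time via the short-kernel-vector enumeration of Lemma~\ref{lem:finite-shifts}, which the paper's proof assumes implicitly; your ``test $X+L$ for each $L\in\Kset$'' version is the one that works, since computing the label difference does not by itself tell you which shift to add.
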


\begin{proof}
The algorithm has three fixed loops: one over the seven shifts in $\Kset$ and two over the six boundary sides. Therefore it checks at most $7\cdot6\cdot6=252$ translated side-pair systems. For each translated system, it performs a fixed number of arithmetic operations: determinant computation, solution by Cramer's rule or parallel interval computation, range checking, and final distance verification. None of these operations requires scanning the $N$ nodes or the $6t$ boundary nodes. Hence the total number of coordinate operations is bounded by a constant independent of $t$ and $N$. Under the fixed-word arithmetic model, the running time is $O(1)$.
\end{proof}

\subsection{Worked Selection Examples}
\textbf{Example 1: faults $0$ and $(1,0)$ in $\EJ_3$.}
Here $t=3$ and $C=(1,0)$. One valid translated point is
\begin{equation}
P=(-2,3).
\end{equation}
Indeed,
\begin{equation}
\dist(P,0)=\max\{2,3,1\}=3,
\end{equation}
and
\begin{equation}
P-C=(-3,3),
\qquad
\dist(P,C)=\max\{3,3,0\}=3.
\end{equation}
Therefore $P$ is a valid new source for faults $0$ and $(1,0)$.

\textbf{Example 2: arbitrary faults in $\EJ_3$.}
Let
\begin{equation}
A=(1,0),
\qquad
B=(2,0).
\end{equation}
Then
\begin{equation}
C=B-A=(1,0).
\end{equation}
Using the valid translated point $P=(-2,3)$ from Example 1, the shifted source is
\begin{equation}
NS=A+P=(-1,3).
\end{equation}
Now
\begin{equation}
NS-A=(-2,3),
\qquad
\dist(NS,A)=3,
\end{equation}
and
\begin{equation}
NS-B=(-3,3),
\qquad
\dist(NS,B)=3.
\end{equation}
Thus $NS=(-1,3)$ is a valid new source for the faults $(1,0)$ and $(2,0)$.

\textbf{Example 3: another translated candidate.}
For faults $0$ and $A=(1,1)$ in $\EJ_3$, the count is $9$ by the counting formula. One valid point is $P=(-2,3)$ because
\begin{equation}
\dist(P,0)=3,
\qquad
P-A=(-3,2),
\qquad
\dist(P,A)=3.
\end{equation}
This example shows that the same distance verification used by the algorithm directly confirms a candidate after the translated side-pair system identifies it.

\section{Complexity Comparison}
\label{sec:complexity}

This section compares the proposed direct selector with two search-based alternatives. Let $N=3t^2+3t+1$.

\subsection{Full Node Scan}
A brute-force method tests every node $X\in\EJ_t$ and checks whether
\begin{equation}
\dist(X,A)=t,
\qquad
\dist(X,B)=t.
\end{equation}
Since there are $N=3t^2+3t+1$ nodes, the full scan has complexity
\begin{equation}
O(N)=O(t^2).
\end{equation}

\subsection{Boundary Scan}
A better search method tests only the boundary around one fault. After translating to $0$ and $C$, it checks all points in $\B$. Since $|\B|=6t$, this method has complexity
\begin{equation}
O(t).
\end{equation}

\subsection{Direct Side-Pair Selection}
Algorithm~\ref{alg:direct-ej} checks at most $7\cdot6\cdot6=252$ translated side-pair systems. Therefore, its complexity is
\begin{equation}
O(1).
\end{equation}
Table~\ref{tab:complexity} summarizes the comparison.

\begin{table}[H]
\caption{Complexity comparison of new-source selection methods in $\EJ_t$.}
\label{tab:complexity}
\centering
\begin{tabular}{lll}
\toprule
Method & Candidates or cases & Complexity\\
\midrule
Full node scan & $N=3t^2+3t+1$ & $O(N)$\\
Boundary scan & $6t$ & $O(t)$\\
Proposed selector & $252$ translated side-pair systems & $O(1)$\\
\bottomrule
\end{tabular}
\end{table}

\begin{figure}[H]
\centering
\includegraphics[width=0.6\linewidth]{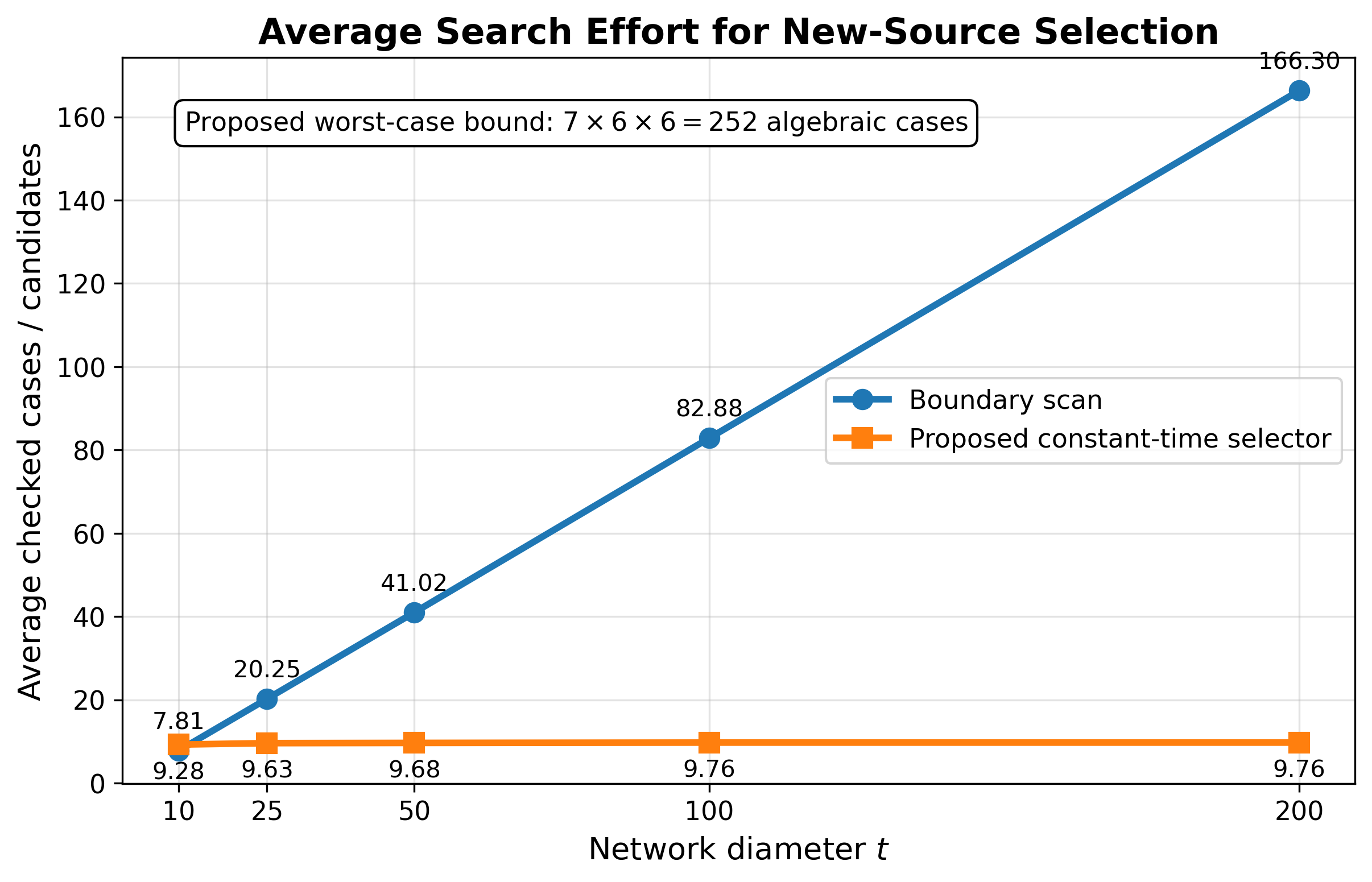}
\caption{Average search effort for new-source selection. Boundary search grows with $t$, while the proposed selector remains nearly constant in the tested cases and is bounded by $252$ translated side-pair systems.}
\label{fig:complexity}
\end{figure}

\section{Computational Validation}
\label{sec:validation}

This section validates the translated side-pair counting and selection algorithms and evaluates the re-rooting recovery performance. The experiments used
\begin{equation}
 t\in\{10,25,50,100,200\},
\end{equation}
with $N=3t^2+3t+1$. Count validation was exhaustive for $t=10,25,50$ and sampled $10{,}000$ nodes for $t=100,200$. Direct-selection validation used $100{,}000$ randomly sampled fault pairs for each value of $t$. The re-rooting experiment used four fault-placement modes, namely random, near, critical, and closepair, with $1000$ trials per exact setting.

\subsection{Experimental Protocol}
All experiments used the canonical coordinate representation defined in Section~\ref{sec:preliminaries}. Node labels were used only for storage and reporting; every distance check was performed by reducing coordinate differences and evaluating the hexagonal distance in \eqref{eq:ej-distance}. For the count-validation experiment, the output of Algorithm~\ref{alg:count-ej} was compared with a brute-force scan of the boundary set $\B$.

For the direct-selection experiment, each sampled fault pair $(A,B)$ was translated to $C=B-A$, Algorithm~\ref{alg:direct-ej} returned a candidate $NS$, and the implementation verified both distance equations in the original coordinates. For the re-rooting experiment, four fault-placement modes were used. The random mode chooses faults uniformly from the non-source nodes. The near mode emphasizes faults close to the original source. The critical mode chooses faults on or near the principal broadcast axes. The closepair mode chooses two faults that are adjacent or nearly adjacent.

\subsection{Counting Validation}
For each tested node $A$, Algorithm~\ref{alg:count-ej} was compared with a brute-force boundary count of all points $P\in\B$ satisfying $\dist(P,A)=t$. Table~\ref{tab:count-validation} shows that the translated side-pair count matched brute force in every tested case.

\begin{table}[H]
\caption{Validation of the translated side-pair count for faults $0$ and $A$.}
\label{tab:count-validation}
\centering
\begin{tabular}{ccccc}
\toprule
$t$ & $N$ & Nodes tested & Mismatches & Max error\\
\midrule
10 & 331 & 331 & 0 & 0\\
25 & 1951 & 1951 & 0 & 0\\
50 & 7651 & 7651 & 0 & 0\\
100 & 30301 & 10000 & 0 & 0\\
200 & 120601 & 10000 & 0 & 0\\
\bottomrule
\end{tabular}
\end{table}

\begin{figure}[H]
\centering
\includegraphics[width=0.6\linewidth]{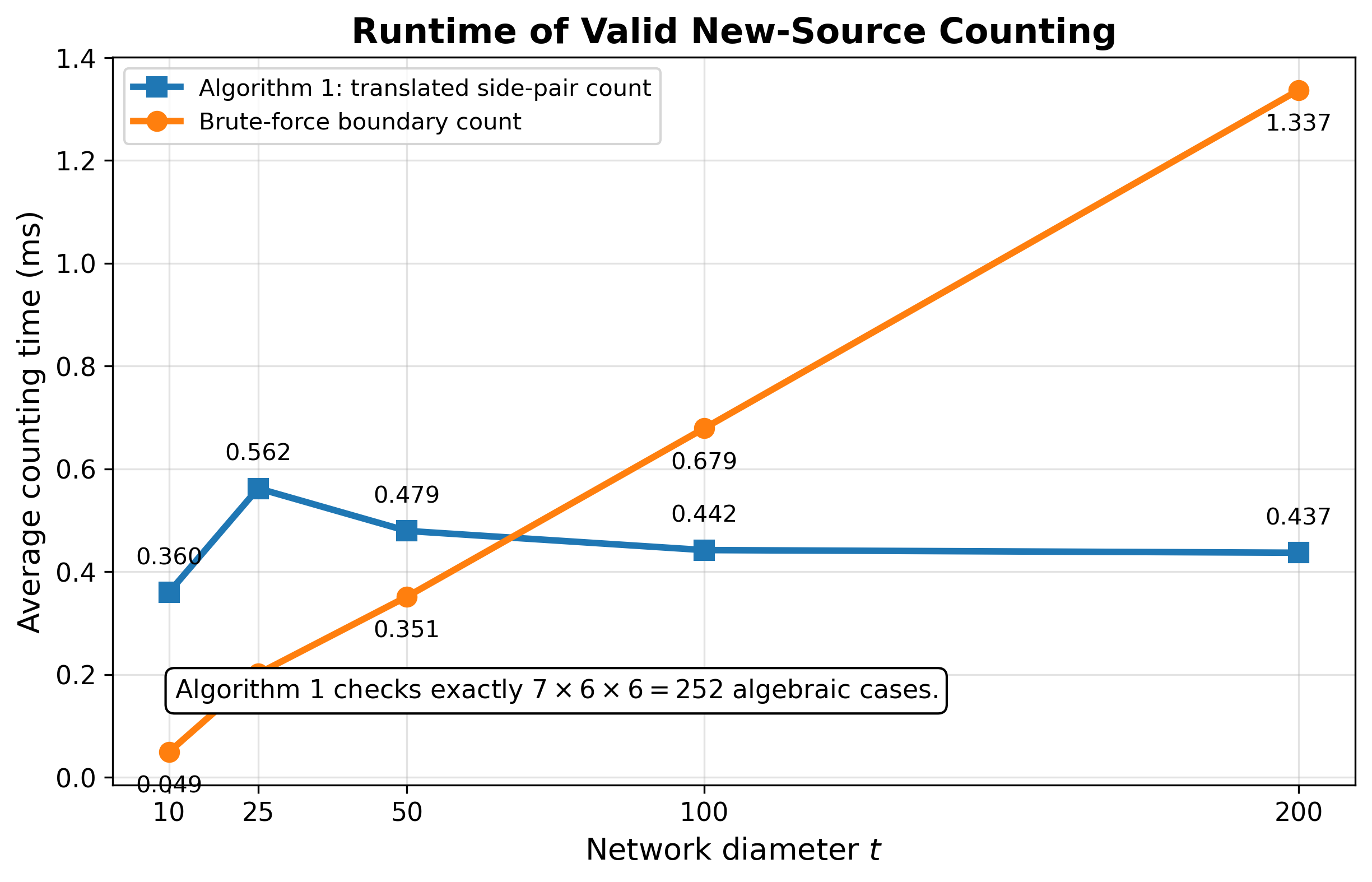}
\caption{Counting-runtime comparison. Algorithm~\ref{alg:count-ej} checks exactly $252$ translated side-pair systems, while brute-force counting scans the $6t$ boundary nodes.}
\label{fig:count-runtime}
\end{figure}

\subsection{Direct-Selection Validation}
For each tested pair $A,B$, Algorithm~\ref{alg:direct-ej} returned a candidate $NS$, and the implementation verified
\begin{equation}
\dist(NS,A)=t,
\qquad
\dist(NS,B)=t.
\end{equation}
Table~\ref{tab:selector-validation} shows that all $500{,}000$ tested fault pairs produced valid outputs.

\begin{table}[H]
\caption{Validation of direct new-source selection over sampled fault pairs.}
\label{tab:selector-validation}
\centering
\begin{tabular}{cccccc}
\toprule
$t$ & $N$ & Pairs & Valid & Failed & Max checked\\
\midrule
10 & 331 & 100000 & 100000 & 0 & 21\\
25 & 1951 & 100000 & 100000 & 0 & 21\\
50 & 7651 & 100000 & 100000 & 0 & 21\\
100 & 30301 & 100000 & 100000 & 0 & 21\\
200 & 120601 & 100000 & 100000 & 0 & 21\\
\bottomrule
\end{tabular}
\end{table}

\subsection{Re-Rooting Recovery Results}
The re-rooting experiment evaluated the complete recovery behavior using one and two faulty nodes. For each value of $t$, each fault count, and each fault-placement mode, $1000$ trials were executed. Thus, the raw experiment contained
\begin{equation}
5\times 2\times 4\times 1000=40000
\end{equation}
trials. Table~\ref{tab:rerooting-summary} aggregates the results over the four modes for each $t$ and fault count. The proposed method achieved $100\%$ success in every setting and reached exactly $N-f$ nodes, where $f$ is the number of faulty nodes.

\begin{table}[H]
\caption{Aggregated re-rooting recovery results over all four fault-placement modes.}
\label{tab:rerooting-summary}
\centering
\begin{adjustbox}{max width=\textwidth}
\begin{tabular}{ccccccccc}
\toprule
$t$ & $N$ & Faults & Trials & Baseline success & Proposed success & Avg. baseline reach & Avg. proposed reach & Expected reach\\
\midrule
10 & 331 & 1 & 4000 & 8.075\% & 100\% & 305.779 & 330.000 & 330\\
10 & 331 & 2 & 4000 & 5.225\% & 100\% & 282.991 & 329.000 & 329\\
25 & 1951 & 1 & 4000 & 4.125\% & 100\% & 1849.122 & 1950.000 & 1950\\
25 & 1951 & 2 & 4000 & 1.550\% & 100\% & 1757.920 & 1949.000 & 1949\\
50 & 7651 & 1 & 4000 & 2.250\% & 100\% & 7320.131 & 7650.000 & 7650\\
50 & 7651 & 2 & 4000 & 1.150\% & 100\% & 7032.936 & 7649.000 & 7649\\
100 & 30301 & 1 & 4000 & 1.375\% & 100\% & 29157.713 & 30300.000 & 30300\\
100 & 30301 & 2 & 4000 & 0.500\% & 100\% & 28182.787 & 30299.000 & 30299\\
200 & 120601 & 1 & 4000 & 0.650\% & 100\% & 116405.605 & 120600.000 & 120600\\
200 & 120601 & 2 & 4000 & 0.250\% & 100\% & 112678.853 & 120599.000 & 120599\\
\bottomrule
\end{tabular}
\end{adjustbox}
\end{table}

\begin{figure}[H]
\centering
\includegraphics[width=0.6\linewidth]{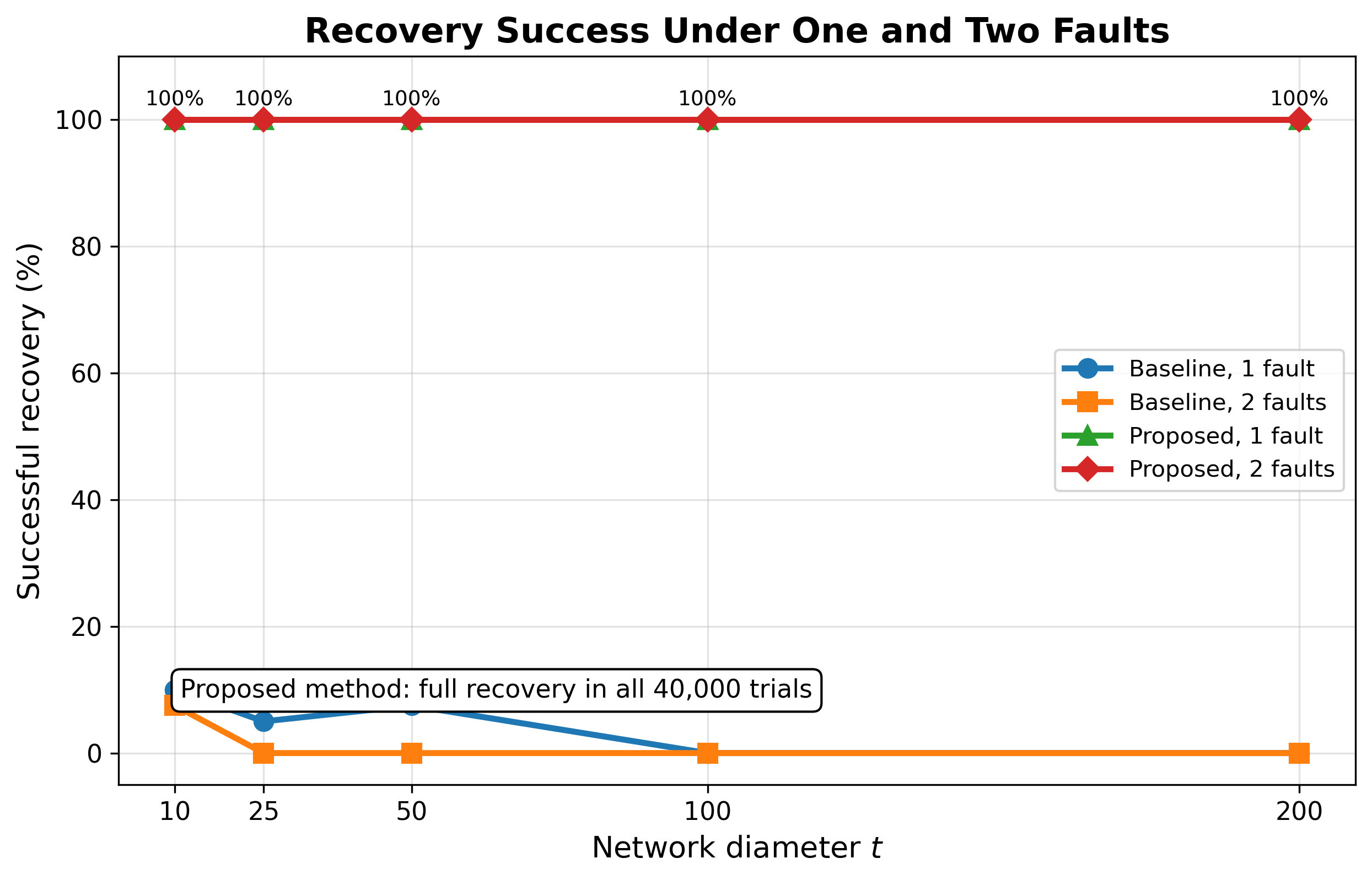}
\caption{Recovery success under one and two faults. The proposed method achieves full recovery in all $40{,}000$ trials, while the baseline broadcast tree loses coverage when faulty nodes interrupt internal broadcast paths.}
\label{fig:recovery-success}
\end{figure}

\begin{figure}[H]
\centering
\includegraphics[width=0.6\linewidth]{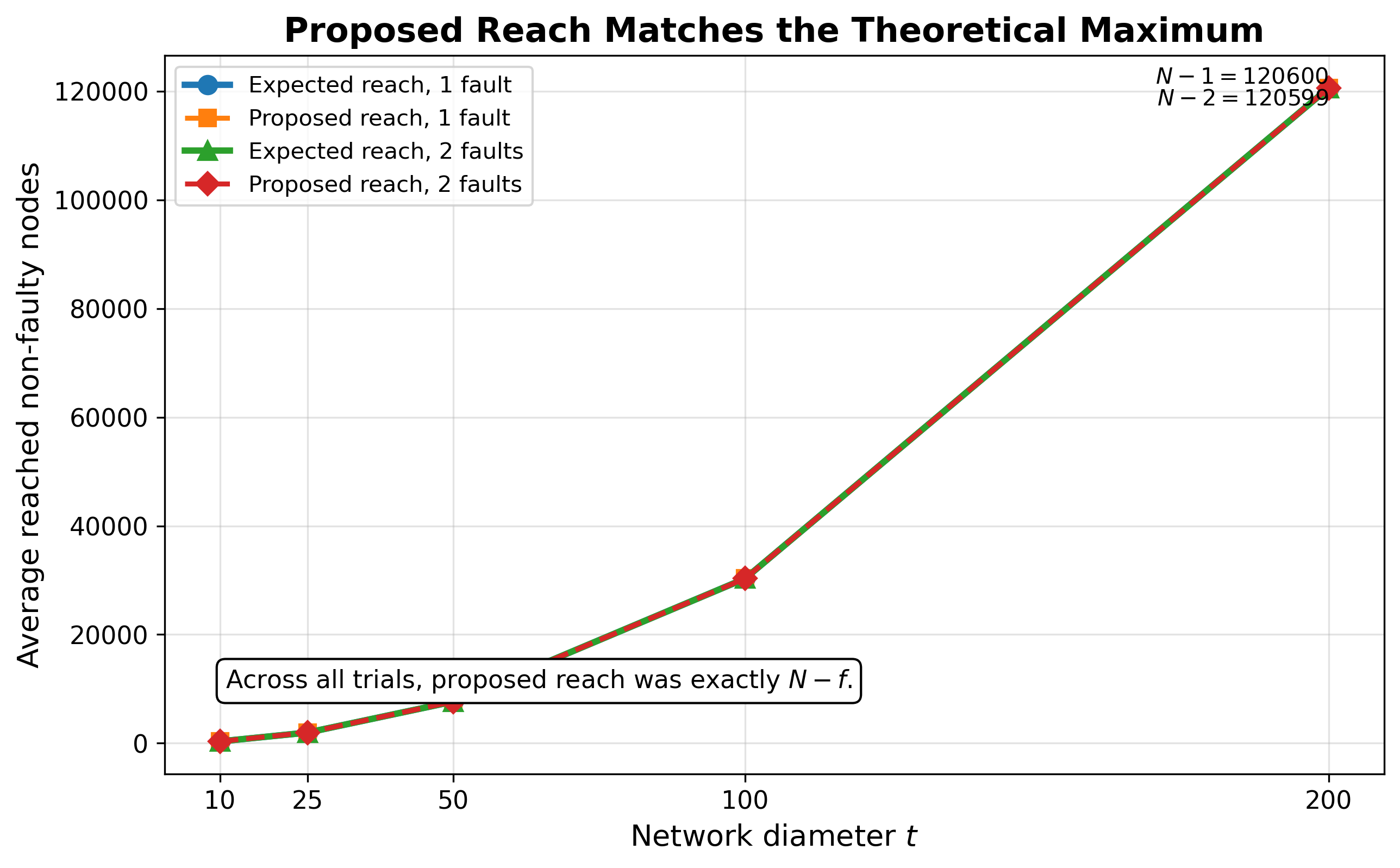}
\caption{Proposed reach versus the theoretical maximum. In all tested settings, the proposed re-rooting method reaches exactly $N-f$ non-faulty nodes, where $f$ is the number of faults.}
\label{fig:proposed-reach}
\end{figure}

\subsection{Runtime Comparison}
The runtime comparison used the same sampled fault pairs for the boundary-search selector and the proposed direct selector. Table~\ref{tab:runtime} and Figure~\ref{fig:runtime} show that the direct method has fixed overhead for small $t$, but it becomes faster as the boundary size grows. At $t=200$, the measured speedup was approximately $9.80\times$.

\begin{table}[H]
\caption{Average runtime per new-source selection query.}
\label{tab:runtime}
\centering
\begin{tabular}{ccccc}
\toprule
$t$ & $N$ & Boundary ms & Direct ms & Speedup\\
\midrule
10 & 331 & 0.008189 & 0.016373 & 0.50$\times$\\
25 & 1951 & 0.019377 & 0.017221 & 1.13$\times$\\
50 & 7651 & 0.046039 & 0.020424 & 2.25$\times$\\
100 & 30301 & 0.095780 & 0.020499 & 4.67$\times$\\
200 & 120601 & 0.214375 & 0.021875 & 9.80$\times$\\
\bottomrule
\end{tabular}
\end{table}

\begin{table}[H]
\caption{Average number of checked candidates or algebraic cases.}
\label{tab:checked-cases}
\centering
\begin{tabular}{cccc}
\toprule
$t$ & Boundary search avg. & Direct avg. & Direct max\\
\midrule
10 & 7.810 & 9.276 & 21\\
25 & 20.247 & 9.630 & 21\\
50 & 41.022 & 9.685 & 21\\
100 & 82.881 & 9.759 & 21\\
200 & 166.298 & 9.757 & 21\\
\bottomrule
\end{tabular}
\end{table}

\begin{figure}[H]
\centering
\includegraphics[width=0.6\linewidth]{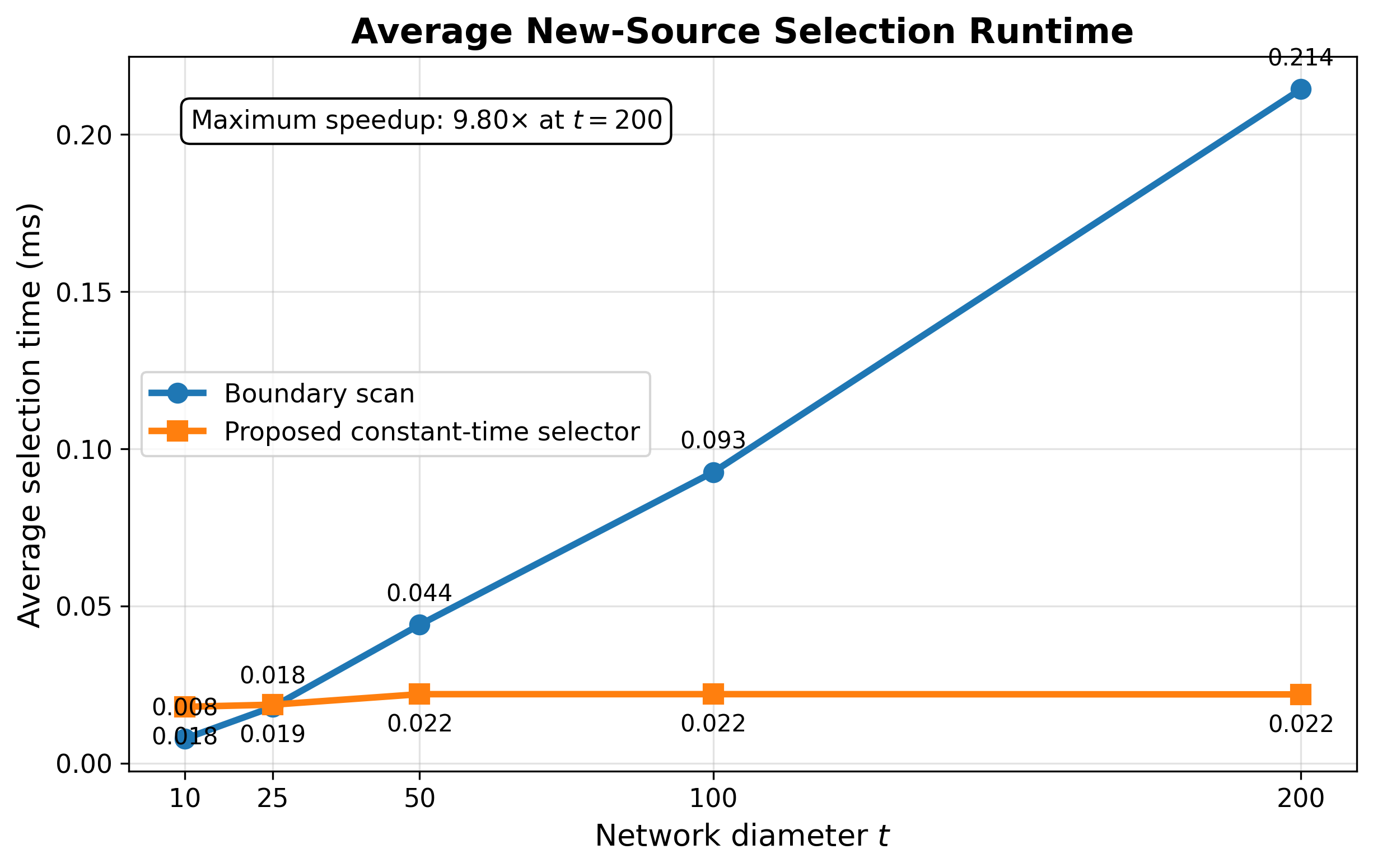}
\caption{Runtime comparison between boundary search and direct selection. The proposed method has fixed algebraic overhead for small $t$, but becomes faster as the boundary size increases.}
\label{fig:runtime}
\end{figure}

\section{Discussion}
\label{sec:discussion}

The contribution of this paper is algorithmic and algebraic rather than a replacement of the underlying broadcasting model. The broadcasting and recovery framework is inherited from the published Eisenstein--Jacobi re-rooting paper, and the present paper strengthens that framework by replacing the new-source selection step with a direct algebraic construction.

The distinction is important. The re-rooting paper answers the existence and recovery question: a suitable new source exists for two faulty nodes and can be used to restore broadcasting. This paper answers the selection question: once two faulty nodes are known, a valid new source can be computed by checking a fixed number of translated side-pair systems. Thus, the two papers are complementary. The first provides the fault-tolerant broadcasting framework, while the present extension improves the algorithmic efficiency of one of its central operations.

The proposed method also explains the geometric reason behind the constant-time bound. The distance-$t$ boundary has $6t$ nodes, but it has only six sides. Searching boundary nodes depends on $t$. Searching translated side-pair systems depends only on the number of sides and the seven relevant quotient shifts, both of which are fixed. Therefore the method avoids growth with network size.

The constant-time method has a fixed overhead. For small values of $t$, a boundary scan may sometimes be competitive because the boundary itself is small. This does not contradict the complexity result. The benefit of the proposed method becomes clearer as $t$ grows: the boundary-search effort increases with $6t$, while the proposed selector remains bounded by the same finite family of translated side-pair systems.

A second point is that the count and the selector serve different purposes. The counting algorithm evaluates all translated side-pair systems and gives the exact number of valid choices for faults $0$ and $A$. The selector stops as soon as it finds one verified candidate. This explains why the theoretical worst-case bound is $252$ systems, while the observed selector checks were much smaller in the sampled experiments.

\section{Conclusion}
\label{sec:conclusion}

This paper developed a closed-form and constant-time method for counting and selecting valid new sources in dense Eisenstein--Jacobi networks. The method extends the published re-rooting-based fault-tolerant broadcasting framework by converting the two-fault source-selection problem into a quotient boundary-intersection problem. For faults $0$ and $A$, the exact number of valid new sources is computed by summing translated side-pair intersections over the seven relevant lattice shifts and the thirty-six hexagonal side pairs. For arbitrary faults $A$ and $B$, the problem is translated to the origin and the difference node $C=B-A$, solved by the same translated side-pair systems, and shifted back to obtain the new source.

The correctness of the method was proved through boundary partition, finite-shift sufficiency, counting-algorithm correctness, selector soundness, selector completeness, and complexity arguments. Since both algorithms evaluate at most $7\cdot6\cdot6=252$ translated side-pair systems and perform only constant-time arithmetic in each case, both run in $O(1)$ time under the fixed-word arithmetic model. Computational validation confirmed that the count matched brute force in all tested cases, the direct selector returned valid outputs for all $500{,}000$ sampled fault pairs, and the re-rooting method achieved $100\%$ recovery over $40{,}000$ trials.

Future work includes ranking multiple valid candidates according to broadcast recovery cost, studying implementation-level integration with the complete re-rooting broadcast procedure, and investigating whether similar fixed-case constructions can be developed for larger fault sets or other algebraic network families.

\appendix
\section{Translated Side-Pair System Details}
\label{app:sidepair-details}

This appendix gives the general algebraic form used by both Algorithm~\ref{alg:count-ej} and Algorithm~\ref{alg:direct-ej}. For a center node $X$, which is either $A$ in the counting problem or $C=B-A$ in the selection problem, a translated side-pair intersection has the form
\begin{equation}
S_i\cap (X+S_j+L),
\qquad L\in\Kset.
\end{equation}
Using the side parameterization $S_i=\{V_i+s u_i:0\leq s\leq t-1\}$, this intersection is described by
\begin{equation}
V_i+s u_i=X+V_j+u u_j+L,
\qquad 0\leq s,u\leq t-1.
\label{eq:app-side-pair}
\end{equation}
Moving all parameter terms to the left gives
\begin{equation}
M_{ij}
\begin{bmatrix}
s\\u
\end{bmatrix}
=
b_{Lij}(X),
\end{equation}
where
\begin{equation}
M_{ij}=[u_i\;-u_j],
\qquad
b_{Lij}(X)=X+V_j+L-V_i.
\end{equation}

If $\det(M_{ij})\ne 0$, then the side directions are nonparallel and the system has at most one solution. Writing
\begin{equation}
u_i=(r_1,r_2),\qquad -u_j=(r_3,r_4),
\qquad b_{Lij}(X)=(b_1,b_2),
\end{equation}
the determinant is
\begin{equation}
\Delta=r_1r_4-r_2r_3.
\end{equation}
The candidate parameters are
\begin{equation}
s=\frac{b_1r_4-r_3b_2}{\Delta},
\qquad
u=\frac{r_1b_2-b_1r_2}{\Delta}.
\end{equation}
This translated side pair contributes one point if and only if $s$ and $u$ are integers and both lie in the interval $[0,t-1]$. Otherwise it contributes zero points.

If $\det(M_{ij})=0$, then the side directions are parallel. In that case, consistency is checked by testing whether $b_{Lij}(X)$ is parallel to $u_i$. When it is not parallel, the contribution is zero. When it is parallel, the two-parameter equation reduces to one integer linear relation between $s$ and $u$. The parameter bounds then determine an interval of feasible integer values. The count algorithm adds the length of that interval, while the selector chooses one feasible parameter value and verifies the resulting point.

This appendix also clarifies why the algorithms do not scan the boundary. A boundary scan enumerates all $6t$ values of $P$. The translated side-pair method instead evaluates a fixed collection of systems indexed by
\begin{equation}
(L,i,j)\in \Kset\times\{1,\ldots,6\}\times\{1,\ldots,6\}.
\end{equation}
The number of such triples is $7\cdot6\cdot6=252$, independent of $t$.

\section{Parallel Side-Pair Computation}
\label{app:parallel}

This appendix gives a compact implementation form for the parallel case used in Algorithm~\ref{alg:direct-ej}. Consider the side-pair equation
\begin{equation}
V_i+s u_i=C+V_j+u u_j+L,
\end{equation}
with $0\leq s,u\leq t-1$. If $\det([u_i\;-u_j])=0$, then the two direction vectors are parallel. Consistency is checked by verifying that $C+V_j+L-V_i$ is parallel to $u_i$. If it is not parallel, the contribution is zero.

If the system is consistent, one coordinate equation determines a relation of the form
\begin{equation}
u = \rho s + \sigma,
\end{equation}
where $\rho\in\{1,-1\}$ for the side directions used in this paper and $\sigma$ is an integer offset. The range restrictions become
\begin{equation}
0\leq s\leq t-1,
\qquad
0\leq \rho s+\sigma\leq t-1.
\end{equation}
These inequalities give an interval
\begin{equation}
L\leq s\leq U.
\end{equation}
The number of feasible integer solutions is $\max\{0,U-L+1\}$. For selection, Algorithm~\ref{alg:direct-ej} chooses $s=L$ when $L\leq U$, computes $u=\rho L+\sigma$, and verifies the two distance equations before returning.

\section*{Acknowledgments}
The authors would like to acknowledge the support of Kuwait University and its Computer Science Department.

\end{document}